\newtheorem{thm}{Theorem}
\newtheorem{lem}{Lemma}
\newtheorem{rem}{Remark}
\newcommand{\proofheader}[1]{\medskip\noindent\textsc{#1}}
\newcommand{\E}[1]{\mathrm{E}(#1)}
\def\tr{\mathrm{tr}\,}
\newcommand{\Hardy}[1]{{\mathcal H_#1}}
\newcommand{\Lebesgue}[1]{{\mathcal L_#1}}
\newcommand{\Htwo}{{\Hardy{2}}}
\newcommand{\Hone}{{\Hardy{1}}}
\newcommand{\Hinf}{{\Hardy{\infty}}}
\newcommand{\Ltwo}{{\Lebesgue{2}}}
\newcommand{\Lone}{{\Lebesgue{1}}}
\newcommand{\Linf}{{\Lebesgue{\infty}}}
\newcommand{\Smirnov}{\mathcal{N}^+}
\newcommand{\re}{\operatorname{Re}}
\newcommand{\im}{\operatorname{Im}}
\newcommand{\minimize}{\operatornamewithlimits{minimize}}
\newcommand{\maximize}{\operatornamewithlimits{maximize}}
\newcommand{\eqdef}{\overset{\underset{\mathrm{def}}{}}{=}}
\newcommand{\T}{\mathbb{T}}
\newcommand{\half}{{\frac{1}{2}}}
\newcommand{\quarter}{{\frac{1}{4}}}
\newcommand{\norm}[1]{\left\|{#1}\right\|}
\newcommand{\eiw}{e^{i\omega}}
\newcommand{\intT}{\int_{-\pi}^{\pi}}
\begin{document}
\title{Optimal Linear Joint Source-Channel Coding with Delay Constraint}
\author{Erik~Johannesson, 
        Anders~Rantzer,~\IEEEmembership{Fellow,~IEEE,}
        Bo~Bernhardsson,
        and~Andrey~Ghulchak
\thanks{Submitted to IEEE Transactions on Information Theory on March 28th 2012.}%
\thanks{This work was supported by the Swedish Research Council through the Linnaeus Center LCCC; the European Union's Seventh Framework Programme under grant agreement number 224428, project acronym CHAT; and the ELLIIT Strategic Research Center. The material in this paper has been presented in part at the American Control Conference (ACC), Baltimore, MA, USA, June 2010, the 19th International Symposium on Mathematical Theory of Networks and Systems (MTNS), Budapest, Hungary, July 2010, and in the Ph.D. thesis \cite{joh11phd}.}
\thanks{The authors are with the Department
of Automatic Control, Lund University, Lund, Sweden.
E-mail: \{erik, rantzer, bob, \hbox{andrey}\}@control.lth.se.}%
\thanks{\copyright 2012 IEEE. Personal use of this material is permitted. Permission from IEEE must be obtained for all other uses, in any current or future media, including reprinting/republishing this material for advertising or promotional purposes, creating new collective works, for resale or redistribution to servers or lists, or reuse of any copyrighted component of this work in other works.}%
}

\maketitle

\begin{abstract}
The problem of joint source-channel coding is considered for a stationary remote (noisy) Gaussian source and a Gaussian channel. The encoder and decoder are assumed to be causal and their combined operations are subject to a delay constraint. It is shown that, under the mean-square error distortion metric, an optimal encoder-decoder pair from the linear and time-invariant (LTI) class can be found by minimization of a convex functional and a spectral factorization. The functional to be minimized is the sum of the well-known cost in a corresponding Wiener filter problem and a new term, which is induced by the channel noise and whose coefficient is the inverse of the channel's signal-to-noise ratio. This result is shown to also hold in the case of vector-valued signals, assuming parallel additive white Gaussian noise channels. It is also shown that optimal LTI encoders and decoders generally require infinite memory, which implies that approximations are necessary.
A numerical example is provided, which compares the performance to the lower bound provided by rate-distortion theory.
\end{abstract}

\begin{IEEEkeywords}
Analog transmission, causal coding, delay constraint, joint source-channel coding, MSE distortion, remote source, signal-to-noise ratio (SNR).
\end{IEEEkeywords}

\IEEEpeerreviewmaketitle

\section{Introduction}
\IEEEPARstart{T}{he} design of systems for point-to-point communication of analog data over noisy communication channels has a theoretical basis in Shannon's separation theorem. The theorem gives a bound on the optimal performance theoretically achievable (OPTA) by any communication system.
Specifically, it says that the distortion can not be made smaller than $\mathcal{D}_{\text{min}}$, which can be obtained from
\begin{equation}\label{eq:minimum_distortion}
\mathcal{R}(\mathcal{D}_\text{min}) = \mathcal{C},
\end{equation}
where $\mathcal{R}(\mathcal{D})$ is the rate-distortion function, which is given by the source statistics and the distortion measure, and $\mathcal{C}$ is the channel capacity. Under appropriate assumptions, the separation theorem also shows that it is possible to come arbitrarily close to $\mathcal{D}_{\text{min}}$ by the combination of source coding and channel coding. These codes can, in principle, be independently developed without loss. This means that the channel code designer does not need to know anything about the source, and vice-versa, which is clearly a practical advantage.

The separation theorem does, however, rely on asymptotic arguments where the delay and the size of the codebook are allowed to increase indefinitely. Consequently, it does not hold in presence of delay or complexity constraints and imposing such constraints generally renders the distortion bound unachievable. 
Since infinitely large delays or codebooks are not possible in practice, a suboptimal performance may have  to be accepted. 
Moreover, to minimize the distortion in the presence of these constraints, it may be necessary to abandon the separation-based design and consider joint source-channel codes. 

This is the subject of the present paper, where we consider transmission of a stationary colored Gaussian source over a power-constrained channel with additive colored Gaussian noise, under the mean-square error (MSE) distortion criterion. The encoder and decoder are constrained to be causal and their combined operations are subject to a delay constraint. Further, we allow for the possibility of a remote (noisy) source. The situation is illustrated in Fig. \ref{fig:comm_system}.

\begin{figure}[tb]
  \centerline{\includegraphics[width=1\hsize]{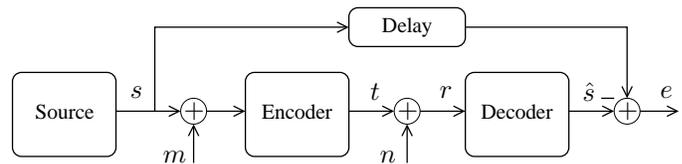}} 
  \caption{The encoder measures the source sequence $s$ plus the measurement noise $m$ and transmits $t$ over the channel. The decoder receives $t$ plus the channel noise $n$ and forms $\hat{s}$, the estimate of $s$. Each source element has to be estimated after a given delay in order to minimize the error $e$.}
  \label{fig:comm_system}
\end{figure}

The encoder and decoder will be restricted to the class of linear and time-invariant (LTI) filters. The linearity assumption and the additive noise models allow us to formulate the distortion minimization as a transfer function optimization problem. The main result is that a jointly optimal encoder-decoder pair from the LTI class can be found by first minimizing a functional of the form
\begin{equation}\label{eq:MainResultFunctional}
  \norm{R-X}_2^2 + \frac{1}{\sigma^2} \norm{XN}_1^2,
\end{equation}
where $R\in\Linf$ and $N\in\Hinf$ are given transfer functions and $\sigma^2$ is the signal-to-noise ratio (SNR), over $X\in\Htwo$. The encoder and decoder are then obtained from a spectral factorization. A corresponding result is also shown to hold in the case with vector-valued signals and parallel additive white Gaussian noise (AWGN) channels.

The restriction to linear encoders and decoders may obviously result in suboptimal solutions. Nevertheless, the linear solution to any problem instance will provide an upper bound to the minimum distortion possible for the given SNR, delay constraint, and signal spectra. Moreover, the proposed design methods are relatively simple and computationally feasible.

An application where this problem formulation could be relevant is the transmission of speech in mobile communication. The source signal to be estimated at the receiver is the speech signal. The delay constraint is based on the acceptable latency and the noise is any background sound present at the microphone. 

The rest of this section will present the relevant previous research and alternative interpretations of the problem. Section \ref{sec:notation} presents the mathematical notation used in this paper. The exact problem formulation is given in Section \ref{sec:problemformulation}. Section \ref{sec:solution} is devoted to the solution of the problem, first in the scalar and then in the vector case, followed by 
a theorem stating that optimal LTI encoders and decoders require infinite memory. Section \ref{sec:numerics} presents a procedure for numerical solution and a numerical example where the performance of the optimal LTI encoders and decoders is compared to the lower bound provided by the separation theorem. Finally, Section \ref{sec:conclusion} presents the conclusions and discusses further research. Some technical lemmas have been put in the appendix.

\subsection{Previous Research}
The problem studied in this paper is closely related to that of finding the optimal modulation matrices for linear coding and decoding of a Gaussian vector source for transmission over a Gaussian vector channel. Optimal modulation matrices were derived in \cite{pilc1969}, where it was also shown that linear modulation is only optimal when the source and channel can be matched. That is, when their dimensions match and the source and channel noise covariance matrices can be diagonalized into uniform variances. The same problem was considered in \cite{lee_petersen1976}, where the solution was also given for the case when the channel components have individual power constraints. The performance of optimal linear coding was compared, for a number of cases, to the OPTA, given by (\ref{eq:minimum_distortion}), in \cite{basar_sankur_abut1980}.\footnote{In all of these three papers, one may view the source vectors as vectors in a one-shot problem, where there is no dependence over time, or as finite sequences. In the former interpretation, the solution satisfies a zero-delay constraint, but this is not very interesting due to the lack of dependence. In the latter interpretation, a delay constraint would translate to requiring the matrices to be lower-triangular, which is not done.}

The general suboptimality of linear coding arises from the fact that it cannot match any colored Gaussian source to any colored Gaussian channel. It has recently been shown, however, that such matching can be achieved by the combination of prediction and modulo-lattice operations \cite{kochman_zamir2011}.

The problem of coding with a remote source was first considered for the Gaussian case with additive noise and MSE distortion in \cite{Dobrushin1962}. It was shown that the problem is asymptotically equivalent to, and can thus be reduced to, the fully observed case and that an optimal encoder generally has a structure consisting of an optimal estimator followed by optimal encoding for a noise-free source. 
This structural result was generalized to the non-gaussian and finite time horizon cases in \cite{Wolf1970}. 	
The problem was further studied in \cite{berger1971rate}, where it was noted that in the case of white source noise, the criterion in the reduced problem is given by the conditional expectation of the original criterion given the encoder input.
It was pointed out in \cite{Witsenhausen1980} that the equivalence in \cite{Dobrushin1962} actually was proved for the one-shot problem as well. Moreover, it was shown that the reduction to the non-remote problem follows from a general "disconnection principle".
In the literature, the problem of coding with a remote source often includes the possibility of noise at the receiver as well. The main motivation for excluding that possibility here is the fact, noted in \cite{Wolf1970}, that the optimality of an encoder-decoder design is independent of additive and independent zero-mean noise at the receiver.

Coding problems with delay constraints have not received the same level of attention as their classical counterparts. 
Some structural results have, however, been obtained. The optimal causal source coder for a white source has been found to be memoryless \cite{Neuhoff1982}.
For a Markov source of order $k$ and delay constraint $d$, an optimal real-time source coder only needs to use the last $\max \lbrace k,d+1\rbrace$ source symbols plus the current state of the decoder. No such memory bound is given, however, when the encoder does not have access to the decoder state \cite{Witsenhausen1979}. 
Joint source-channel coding with noiseless feedback was considered for finite alphabet sources in	\cite{Walrand} where it was demonstrated that feedback is useful in general, but that coding is useless for a class of channels with a certain symmetry property.
The results in \cite{Witsenhausen1979, Walrand} have been generalized in \cite{teneketzis2006}, which also gives a nice overview of the literature on real-time coding.
Conditions have also been found for when optimal performance can be achieved without coding (even when allowing coding systems with arbitrary delay) \cite{gastpar}. 

Since the OPTA given by (\ref{eq:minimum_distortion}) cannot generally be achieved in the presence of delay constraints, a relevant question to ask is of course what the OPTA is when there are such constraints. A partial answer in the form of upper bounds on the rate-distortion functions for zero-delay and causal source coding is given in the recent paper \cite{derpich2011}. Interestingly, some of the results in that paper are obtained by solving a problem which is somewhat similar to the one considered in this paper. The solution of that problem can be applied to solve some particular instances of the problem considered in this paper. The main difference is that they assume that the encoder has access to noiseless feedback from the channel output. Moreover, only the scalar case with zero delay constraint and no noise at the source is considered.  The same problem has previously been 
considered in \cite{Derpich08, DerpichPhd} as a means to design optimal scalar feedback quantization schemes. 

Real-time source coding for a remote source has been considered in \cite{Borkar2001}. The structural results of \cite{Witsenhausen1979, Walrand} were extended to cover remote sources in \cite{Yuksel2010}, which also presented a separation result for the linear-quadratic Gaussian case similar to the one in \cite{Dobrushin1962}. 
A method for design of optimal real-time coding systems for noisy channels was presented in \cite{Mahajan2008} using noisy feedback and in \cite{mahajan09} without feedback. However, there seems to be no method for efficient numerical application of the solution.

\subsection{Alternative Interpretations}
It is possible to make two alternative interpretations of the problem illustrated in Fig. \ref{fig:comm_system}.
\subsubsection{Connection to Wiener Filter}	
The problem of estimating a signal that is measured with additive noise under an MSE criterion is solved by the Wiener filter \cite{Wiener1964}. The filter is usually obtained by solving the Wiener-Hopf equations, but can also be expressed in the frequency domain as the stable filter $K$ that minimizes 
\begin{align}\label{eq:wienercost}
  \norm{(z^{-d}-K)S}_2^2 + \norm{KM}_2^2,
\end{align}
where $d$ is the allowed time delay and $S$ and $M$ are transfer functions that represent the frequency characteristics of the signal of interest and the measurement noise, respectively.

It is possible to interpret the problem in Fig. \ref{fig:comm_system} as a distributed Wiener filtering problem, where the filter is separated into two different locations. The communication channel is used to model the communication constraint between the two locations. This interpretation is strengthened by the fact that minimization of (\ref{eq:wienercost}) is equivalent to minimizing
\begin{equation}\label{eq:wienercost2}
  \norm{R-X}_2^2, 
\end{equation}
where $R$ is the same transfer function as in (\ref{eq:MainResultFunctional}), over $X\in\Htwo$. Comparing (\ref{eq:wienercost2}) with 
(\ref{eq:MainResultFunctional}) it is seen that the cost in the present problem is equal to the cost in a Wiener filtering problem plus an additional term, which is induced by the communication channel. Since the coefficient of the new term is the inverse of the channel's SNR, the cost is asymptotically equal to that in the Wiener filtering problem when the SNR tends to infinity.

\subsubsection{As a Feed-Forward Control Problem}
Fig. \ref{fig:comm_system} may be interpreted as follows: The source signal is a disturbance that will affect some system where a controller (the decoder) can compensate. The controller has a remote sensor that measures the disturbance and transmits information to the controller over the channel. In this interpretation the delay block may also include any dynamics that the disturbance passes through on the way. A similar interpretation was discussed in \cite{Witsenhausen1980}. 

A similar problem setup was studied in \cite{martins07}, where information theory was used to 
find a lower bound on the reduction of entropy rate made possible by side information communicated through a general channel with known capacity. Under stationarity assumptions, this
was used to derive a lower bound, which is a generalization of Bode's
integral equation, on a sensitivity-like function. 

\section{Notation} \label{sec:notation}
The techniques in this paper rely on concepts from functional analysis, such as $\Lebesgue{p}$ (Lebesgue), $\Hardy{p}$ (Hardy) and $\Smirnov$ (Smirnov) function classes and inner-outer factorizations. To conserve space, only some of the most important facts will be given here. The interested reader is referred either to \cite{joh11phd} or to \cite{Garnett}, \cite{rudin86real} and \cite{Inouye} for the remaining relevant definitions and theorems.

The natural logarithm is denoted $\log$.
The complex unit circle is denoted by $\T$.
The singular value decomposition of $A$ is taken as $A=U \Sigma V^*$, where $\Sigma$ is square. 
A singular value decomposition of a transfer matrix \mbox{$X \in \Lebesgue{p}$} is defined pointwise on $\T$ as \[X(\eiw) =U(\eiw) \Sigma (\eiw) V^* (\eiw),\] where $U,V \in \Linf$ and $\Sigma \in \Lebesgue{p}$.

For matrix-valued functions $X(z), Y(z)$ defined on $\T$, define 
\begin{equation*}
  \langle X, Y \rangle = \intT \tr\left(X(e^{i\omega})^* Y(e^{i\omega})\right) \frac{d\omega}{2\pi}
\end{equation*}
and the norms
\begin{align*}
  \norm{X}_1 &= \intT \sqrt{X(e^{i\omega})^*X(e^{i\omega})} \ \frac{d\omega}{2\pi} \\
  \norm{X}_2 &= \left(\intT \norm{X(e^{i\omega})}_F^2\ \frac{d\omega}{2\pi}\right)^{1/2}, 
\end{align*}
where $\norm{\cdot}_F$ is the Frobenius norm.

When a function in $\Hardy{p}$ is evaluated on $\T$, it is to be understood as the radial limit $\lim_{r\rightarrow 1^+} X(rz)$. 
The arguments of transfer matrices will often be omitted when they are clear from the context.
Equalities and inequalities involving functions evaluated on $\T$ are to be interpreted as holding almost everywhere
on $\T$. 

\section{Problem Formulation}  \label{sec:problemformulation}
Consider the system in Fig. \ref{fig:comm_system}. The source $s$, source noise $m$ and channel noise $n$ are assumed to be mutually independent, stationary Gaussian\footnote{Since only linear solutions are considered, it does not matter if the source, measurement noise or the channel noise are Gaussian or not. Linear solutions may, of course, be more or less suboptimal depending on the distributions.} sequences with zero mean and known covariance functions. The communication channel has additive noise and a power constraint. That is,
\begin{gather}
r=t+n \\ \E{t(k)^2} \leq \sigma^2. \label{eq:scalar_power_constraint_time_domain}
\end{gather}
Denote the encoder mapping by $\gamma(\cdot)$ and the decoder mapping by $\delta(\cdot)$. The encoder and decoder are assumed to be causal LTI filters with inputs $s+m$ and $r$, respectively. The estimate of the source sequence is
\begin{equation}
\hat{s} = \delta(t+n) = \delta(\gamma(s+m)+n).
\end{equation} 
Denoting the delay, in number of samples, by $d$, the reconstruction error is 
\begin{equation}
e(k) = s(k-d)-\hat{s}(k).
\end{equation}
The objective is to choose the encoder and decoder to minimize the stationary value of the MSE, or $\E{e(k)^2}$, subject to the power constraint.

Due to the linearity assumption, the problem can be formulated in the frequency domain, as is illustrated in Fig. \ref{fig:coding_block_diagram}. In this formulation, all the inputs are mutually independent, zero mean, white noise sequences with unit variance. The transfer functions $S(z), M(z)$ and $N(z)$ are spectral factors of the sequences $s, m$ and $n$, respectively. The encoder and the decoder are represented by the transfer functions $C(z)$ and $D(z)$. In this formulation, the problem has been generalized in two aspects:
\begin{itemize}
\item The delay is replaced by a general LTI filter $P$. That is, the objective is to estimate the source signal after it has passed through $P$.
\item The error $e$ is passed through a LTI filter $W$, representing a frequency weighting function, before minimization.
\end{itemize}

\begin{figure}[tb]
  \centerline{\includegraphics[width=1\hsize]{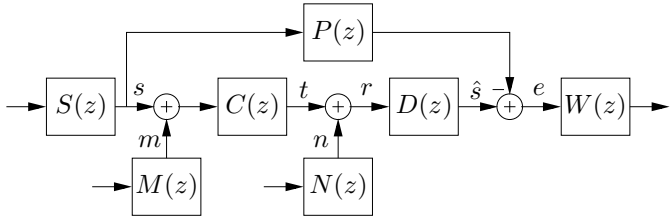}} 
  \caption{A representation of the problem in the frequency domain. The transfer functions $S,M$ and $N$ are spectral factors of the source, measurement noise and channel noise, respectively. The delay constraint is determined by $P$. The encoder and decoder filters are given by $C$ and $D$. $W$ is an optional frequency weight.}
  \label{fig:coding_block_diagram}
\end{figure}

It is assumed that $S,M,N,P,W \in \Hinf$, that $N,W$ are invertible in $\Hinf$ and that
\begin{align}\label{eq:FGepsilon}
  \exists \varepsilon > 0 \text{ such that } SS^*+MM^*\geq \varepsilon \text{ on }\T,
\end{align}
which implies that $S$ and $M$ have no common zeros on the unit circle (an equivalent condition if $S(z)$ and $M(z)$ are rational functions). 

The objective is to choose $C$ and $D$ to minimize the stationary variance of $e$ after filtering by $W$. By expressing the z-transform of $e$ in terms of the transfer functions in Fig. \ref{fig:coding_block_diagram}, this quantity can be expressed as 
\begin{equation}\label{eq:coding_SISO_objective}
  J(C,D) = \norm{W(P-DC)S}_2^2 + \norm{WDCM}_2^2 + \norm{WDN}_2^2.
\end{equation} 
Similarly, the power constraint on $t$ can be written as
\begin{equation}\label{eq:coding_constraint}
  \norm{CS}_2^2 + \norm{CM}_2^2 \leq \sigma^2.
\end{equation}
It follows from (\ref{eq:coding_SISO_objective}) and (\ref{eq:coding_constraint}) that $C$ and $D$ need to be square integrable on the unit circle in order for $J(C,D)$ to be finite and the power constraint to be satisfied. Since the encoder and decoder also should be causal and stable this implies that the optimization should be performed over $C,D\in\Htwo$.

\section{Optimal Linear Encoder and Decoder}\label{sec:solution}
The problem of finding an optimal linear encoder and decoder will first be solved in the scalar case. The solution will then, under some additional assumptions, be generalized to the vector case.
\subsection{Scalar case}
The objective function $J(C,D)$ is clearly not convex in the pair $(C,D)$ due to the appearance of the product $DC$. 
In order to find a minimum, the optimization problem will be solved in two steps.

The idea is to first consider the product $DC$ as given and then to find an optimal factorization of this product. The factorization gives an analytical expression for the cost in terms of the product, which means that optimization of the objective may then be performed over the product. When an optimal product is found, the optimality conditions from the solution to the factorization problem can then be applied to find optimal $C$ and $D$.

First, however, it will be shown that the power constraint (\ref{eq:coding_constraint}) can be equivalently written as
\begin{equation}\label{eq:coding_constraint_in_H}
  \norm{CH}_2^2 \leq \sigma^2,
\end{equation}
where the function $H$ has some nice properties.
\begin{lem}\label{lem:Hdefine}
  Suppose that $S, M \in \Hinf$ and that (\ref{eq:FGepsilon}) holds. Then there exists $H \in \Hinf$ with $H^{-1}\in\Hinf$ such that 
  \begin{equation}\label{eq:HFG}
    HH^* = SS^* + MM^* \text{ on } \T.
  \end{equation}
\end{lem}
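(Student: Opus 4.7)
The plan is to construct $H$ as the outer spectral factor of $\Phi := SS^* + MM^*$. First I would verify the regularity of $\Phi$: since $S, M \in \Hinf$ their boundary values are bounded, so $\Phi$ is bounded above on $\T$, while (\ref{eq:FGepsilon}) gives $\Phi \geq \varepsilon > 0$ almost everywhere. Hence $\log\Phi \in \Linf(\T) \subset \Lone(\T)$, which is exactly the integrability condition needed to realize $\sqrt{\Phi}$ as the modulus of an outer function.

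Next, I would invoke classical outer function theory (as in \cite{Garnett} or \cite{Inouye}) to obtain an outer function $H$, analytic on the exterior of the unit disc in accordance with the paper's convention for $\Hinf$, whose boundary values satisfy $|H(e^{i\omega})|^2 = \Phi(e^{i\omega})$ almost everywhere on $\T$; this is precisely (\ref{eq:HFG}). Explicitly, $H$ can be written as the exponential of a Herglotz/Poisson integral of $\tfrac{1}{2}\log\Phi$, adapted to the exterior by the substitution $z \mapsto 1/z$ in the standard unit-disc formula.

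To see that $H, H^{-1} \in \Hinf$, observe that $\log|H|$ equals the Poisson integral of the bounded function $\tfrac{1}{2}\log\Phi$, and $\log|H^{-1}| = -\log|H|$ is the Poisson integral of $-\tfrac{1}{2}\log\Phi$; both integrands are uniformly bounded, so by the maximum principle
\begin{equation*}
\|H\|_\infty \leq \bigl(\|S\|_\infty^2 + \|M\|_\infty^2\bigr)^{1/2}, \qquad \|H^{-1}\|_\infty \leq \varepsilon^{-1/2},
\end{equation*}
which gives the claim. Note that the outer-function construction also guarantees that $H$ is zero-free on the exterior, so $H^{-1}$ is itself analytic there and thus $H^{-1} \in \Hinf$ in the sense of the paper.

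The main obstacle is really only bookkeeping, namely matching the paper's convention for $\Hinf$ (radial limits from outside $\T$) to the standard outer-function formula, which is more commonly stated for the interior of the disc. The essential analytic input is the two-sided bound $0 < \varepsilon \leq \Phi \leq \|S\|_\infty^2 + \|M\|_\infty^2$ on $\T$, which makes $\log\Phi$ bounded and thereby guarantees that both $H$ and its reciprocal are in $\Hinf$.
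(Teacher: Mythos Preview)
Your proposal is correct and follows essentially the same approach as the paper: construct $H$ as the outer spectral factor of $SS^*+MM^*$, use the upper bound from $S,M\in\Hinf$ to place $H$ in $\Hinf$, and use the lower bound $\varepsilon$ to get $\norm{H^{-1}}_\infty\le\varepsilon^{-1/2}$. The only cosmetic difference is that the paper cites the Wiener--Akutowicz factorization theorem and then invokes its Smirnov-class lemma (Lemma~\ref{lem:product_with_inverse_in_Hp}) to conclude $H^{-1}\in\Hinf$, whereas you argue this directly via the Poisson-integral representation of $\log|H|$; both routes amount to the same fact that the reciprocal of a bounded-below outer function lies in $\Hinf$.
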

\begin{proof}
  By (\ref{eq:FGepsilon}) and the factorization theorem in \cite{Wiener1959} there exists an outer function $H\in\Htwo$ such that (\ref{eq:HFG}) holds.
  Since $S,M\in\Hinf$ it follows that $H\in\Hinf$. Moreover, it follows from (\ref{eq:FGepsilon}) that $\norm{H^{-1}}_\infty \leq 1/\sqrt{\varepsilon}$ and since $H$ is outer it then follows from Lemma \ref{lem:product_with_inverse_in_Hp} (in the appendix) 
   that $H^{-1}\in\Hinf$. 
\end{proof}

Now, introduce $K=DC\in\Hone$. The objective (\ref{eq:coding_SISO_objective}) can then be written as 
\begin{equation}\label{eq:coding_SISO_objective_with_K}
\norm{W(P-K)S}_2^2 + \norm{WKM}_2^2 + \norm{WDN}_2^2.
\end{equation}
Note that the first two terms are constant for fixed $K$. The minimum over $C$ and $D$, given $K$, is thus obtained by minimizing the third term in (\ref{eq:coding_SISO_objective_with_K}) subject to (\ref{eq:coding_constraint_in_H}) and $K=DC$. This minimization problem is called the \emph{optimal factorization problem}. 

The interpretation is that for any given product of the encoder and decoder, the contribution to the objective of the signals that pass through both the encoder and the decoder is not affected by the choice of the factors $C$ and $D$ --- only their product matters. The channel noise, however, only passes through the decoder, which means that $D$ (and implicitly $C$ since $C=D^{-1}K$) should be chosen to minimize the impact of the channel noise on the objective. 
The solution to the scalar version of the optimal factorization problem is given by the following lemma.
\begin{lem}[Optimal factorization, scalar case] \label{lem:factorization_coding}
  Suppose that $\sigma > 0$, $K\in \Hone$ and that
  $H,N,W\in\Hinf$ are invertible in $\Hinf$. Then the optimization problem
  \begin{equation} \label{intermediateobjective} 
    \minimize_{C,D\in\Htwo} \norm{WDN}_2^2
  \end{equation} 
  subject to
  \begin{equation}\label{lemmaconstraints}
    K=DC,\quad \norm{CH}_2^2 \leq \sigma^2
  \end{equation}
  attains the minimum value
  \begin{align}\label{minval}
    \frac{1}{\sigma^2} \norm{WKHN}_1^2.
  \end{align}

  Moreover, if $K$ is not identically zero then $C,D\in\Htwo$ are optimal if and only if $DC=K$ and
  \begin{equation}\label{optimalitycondition}
    |C|^2 = \frac{\sigma^2}{\norm{WKHN}_1}  \left|\frac{WKN}{H}\right| \text{ on } \T.
  \end{equation}
  If $K=0$, then the minimum is achieved by $D=0$
  and any function $C\in\Htwo$ that satisfies $\norm{CH}_2^2 \leq \sigma^2$.
\end{lem}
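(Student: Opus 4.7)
The plan is to establish (\ref{minval}) and the characterization (\ref{optimalitycondition}) via a Cauchy--Schwarz lower bound on $\T$, combined with an outer-function construction of an optimizer; the degenerate case $K=0$ will be handled separately. For the lower bound, any feasible pair $(C,D)$ satisfies the pointwise identity $|WKHN|=|WDN|\cdot|CH|$ on $\T$, since $DC=K$. Cauchy--Schwarz in $\Ltwo$ then gives
\[
  \norm{WKHN}_1 \,=\, \intT |WDN|\cdot|CH|\,\frac{d\omega}{2\pi} \,\leq\, \norm{WDN}_2 \cdot \norm{CH}_2 \,\leq\, \sigma\,\norm{WDN}_2,
\]
and squaring yields (\ref{minval}) as a universal lower bound on the objective.

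To attain this bound I would construct an explicit optimizer. Equality in Cauchy--Schwarz forces $|WDN|$ to be proportional to $|CH|$, and a standard rescaling argument ($C\mapsto\alpha C$, $D\mapsto D/\alpha$ with $\alpha>1$) shows that the power constraint must be active at an optimum. Solving these two conditions simultaneously for $|C|^2$ yields exactly the formula (\ref{optimalitycondition}). I would then take $C$ to be the \emph{outer} function with this prescribed modulus on $\T$; for such a $C$ to exist, I need $\log|C|\in\Lone$, which holds because $W,N,H$ are invertible in $\Hinf$ (so their log-moduli are bounded) and because any nonzero $K\in\Hone$ satisfies $\log|K|\in\Lone$ by the standard Hardy-space fact. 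Setting $D:=K/C$, the outerness of $C$ and $K\in\Hone$ put $D$ in the Smirnov class $\Smirnov$. Using that $1/|WN|$ is bounded on $\T$, a pointwise estimate gives $|D|^2 \leq c\,|KH|\in\Lone$ for some constant $c>0$, whence $D\in\Smirnov\cap\Ltwo=\Htwo$. A direct computation then confirms $\norm{CH}_2^2=\sigma^2$ and $\norm{WDN}_2^2=\norm{WKHN}_1^2/\sigma^2$.

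For the ``only if'' part of the characterization, attainment of the lower bound requires simultaneous equality in Cauchy--Schwarz and in the power constraint, and these two conditions together force (\ref{optimalitycondition}) almost everywhere on $\T$. The resulting freedom in $C$ is only up to an inner factor, which is absorbed into $D=K/C$ without changing the objective. For $K=0$, the invertibility of $W$ and $N$ in $\Hinf$ implies that $\norm{WDN}_2=0$ forces $D=0$, and any $C\in\Htwo$ with $\norm{CH}_2\leq\sigma$ then trivially satisfies $DC=0=K$.

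The step I expect to be the main obstacle is promoting the pointwise modulus prescription on $\T$ to a genuine pair in $\Htwo\times\Htwo$: one must verify both that $|C|$ admits an outer realization (the log-integrability check) and, more delicately, that the candidate $D=K/C$ lies in $\Htwo$ rather than merely in $\Ltwo$. This is precisely where the Smirnov-class machinery and the $\Hinf$-invertibility hypotheses on $H$, $W$, $N$ carry the proof; the remaining identities reduce to routine computations in $\Lone$ and $\Ltwo$.
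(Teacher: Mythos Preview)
Your proposal is correct and follows essentially the same route as the paper: the lower bound via Cauchy--Schwarz on the factorization $|WKHN|=|WDN|\,|CH|$, the equality analysis yielding (\ref{optimalitycondition}) together with $\norm{CH}_2=\sigma$, and the existence argument via an outer $C$ with the prescribed modulus followed by $D=K/C\in\Smirnov\cap\Ltwo=\Htwo$. The paper packages your Smirnov step as its Lemma~\ref{lem:product_with_inverse_in_Hp} and verifies $D\in\Ltwo$ by computing $\norm{KC^{-1}}_2^2$ explicitly rather than via your pointwise bound $|D|^2\le c\,|KH|$, but these are cosmetic differences.
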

\begin{proof}
  If $K=0$ the proof is trivial, so assume that $K$ is not identically
  zero. Then $C$ is not identically zero and $D=KC^{-1}$.
  Then (\ref{lemmaconstraints}) and Cauchy-Schwarz's inequality gives
  \begin{align*}
    \norm{WDN}_2^2 &= \norm{WKC^{-1}N}_2^2 
    \geq \frac{\norm{CH}_2^2}{\sigma^2} \norm{WKC^{-1}N}_2^2 
    \\&\geq  \frac{1}{\sigma^2} \left\langle \left |CH\right |,\left |WKC^{-1}N\right | \right\rangle^2
    =\frac{1}{\sigma^2} \norm{WKHN}_1^2 
  \end{align*}
  This shows that (\ref{minval}) is a lower bound on
  the value. Equality holds if and only if
  $|WKC^{-1}N|$ and $|CH|$ are proportional on $\T$ and $\norm{CH}_2^2= \sigma^2$.
  It is easily verified that this is equivalent to (\ref{optimalitycondition}).  
  Thus, $C$ and $D$ achieve the lower bound if and only if
  $D=KC^{-1}$ and (\ref{optimalitycondition})
  holds.

  It remains to show existence of such $C,D\in\Htwo$. Note that
  $WKNH^{-1} \in \Hone$ is not identically zero. 
  Hence, by Theorem 17.17 in \cite{rudin86real},
  $\log{|WKNH^{-1}|} \in \Lone$.
  It follows from the factorization theorem in \cite{Wiener1959} that there exists an outer $C\in \Htwo$ that satisfies (\ref{optimalitycondition}). Thus $$\norm{KC^{-1}}_2^2 = \frac{1}{\sigma^2}\norm{WKHN}_1\norm{W^{-1}KHN^{-1}}_1 < \infty,$$ 
  so $D=KC^{-1}\in\Ltwo$.
  Since $K\in\Hone$ and $C\in\Htwo$ is
  outer it follows from Lemma~\ref{lem:product_with_inverse_in_Hp} (in the appendix) that $D=KC^{-1}\in\Htwo$.
\end{proof}

\begin{rem}
  Optimal $D$ satisfy 
  \begin{equation}\label{eq:coding_optimality_conition_D}
    |D|^2 = \frac{\norm{WKHN}_1}{\sigma^2} \left|\frac{KH}{WN}\right| \text{ on } \T.
  \end{equation}
  Apparently, the magnitudes of $C$ and $D$ are both proportional to the square root of the magnitude of $K$. This provides some intuition to why the minimum value depends on the 1-norm of $K$.
\end{rem}

\begin{rem}\label{rem:coding_CD_nonunique}
  The existence part of Lemma \ref{lem:factorization_coding} shows that a particular solution, where $C$ is outer, can be obtained. By using the freedom available in spectral factorization, it is possible to obtain other solutions, for example by changing the sign of both $C$ and $D$, or by instead choosing $D$ to be outer. More generally, in the rational case, any non-minimum phase zeros or time delays could be located in $C$ or $D$.
\end{rem}

For any given $K$ an optimal encoder-decoder pair, under the constraint that their product is $K$, is specified by (\ref{optimalitycondition}) and (\ref{eq:coding_optimality_conition_D}), respectively. An optimal $K$ can in turn be obtained by inserting the minimum value of $\norm{WDN}_2^2$ into (\ref{eq:coding_SISO_objective_with_K}) and minimizing
\begin{align*}
  \varphi(K) &= \norm{W(P-K)S}_2^2 + \norm{WKM}_2^2 \\&+ \frac{1}{\sigma^2} \norm{WK \begin{bmatrix}S & M\end{bmatrix}
  N}_1^2
\end{align*}
over $K$. This is a convex problem. That this procedure in fact solves the main problem is shown by the following theorem, which is the main result of this paper.
\begin{thm}\label{thm:coding_equivalence_SISO}
  Suppose that $M, N, S, P, W \in \Hinf$, where $N$ and $W$ are invertible in $\Hinf$, that $\sigma > 0$ and that (\ref{eq:FGepsilon}) holds. Then the optimization problem
  \begin{equation}\label{theoobj}
    \minimize_{C,D\in\Htwo} J(C,D)
  \end{equation}
  subject to
  \begin{equation}\label{theoconstr}
    \norm{CS}_2^2 + \norm{CM}_2^2 \leq \sigma^2
  \end{equation}
  attains a minimum value that is equal to the minimum of the convex optimization problem
  \begin{align}\label{Kcost}
    \minimize_{K\in\Htwo} \varphi(K),
  \end{align}
  which is attained by a unique minimizer. 
  
  Moreover, suppose $K\in\Htwo$ is a solution to (\ref{Kcost}). If $K$ is not identically zero, then  $C$ and $D$ solve (\ref{theoobj}) subject to (\ref{theoconstr}) if and only if $C\in\Htwo$,
  $D=KC^{-1}\in\Htwo$ and
  \begin{equation}\label{theocondition}
    |C|^2 = \frac{\sigma^2}{\norm{WKN\begin{bmatrix}S & M\end{bmatrix}}_1}\frac{|WKN|}{\sqrt{|S|^2 + |M|^2 }} \text{ on } \T.
  \end{equation}
  If $K=0$, then the solution to (\ref{theoobj}) and (\ref{theoconstr}) is given by
  $D=0$ and any function $C\in\Htwo$ that satisfies (\ref{theoconstr}). 
\end{thm}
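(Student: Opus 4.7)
The plan is to reduce the non-convex joint minimization to the convex problem $\min_{K\in\Htwo} \varphi(K)$ via the change of variable $K = DC$ and then invoke standard weak-compactness and strict-convexity arguments. First, by Lemma \ref{lem:Hdefine} the power constraint (\ref{theoconstr}) rewrites as $\|CH\|_2^2 \le \sigma^2$. For any admissible $(C,D) \in \Htwo\times\Htwo$, set $K = DC \in \Hone$; the first two terms of $J(C,D)$ depend on $(C,D)$ only through $K$, so for fixed $K$ one minimizes $\|WDN\|_2^2$ over all factorizations. This is exactly Lemma \ref{lem:factorization_coding}, whose minimum value is $\frac{1}{\sigma^2}\|WKHN\|_1^2$. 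Using $|H|^2 = |S|^2 + |M|^2$ gives $\|WKHN\|_1 = \|WKN\begin{bmatrix}S & M\end{bmatrix}\|_1$, so the reduced objective is exactly $\varphi(K)$, and the existence portion of Lemma \ref{lem:factorization_coding} guarantees that every $K\in\Hone$ is realized by some admissible pair. Finiteness of $\varphi(K)$ forces $(P-K)S, KM\in\Ltwo$ and hence $|KH|^2 \in \Lone$; combined with $|H|^2 \ge \varepsilon$ this yields $K \in \Ltwo\cap\Hone = \Htwo$, so the infimum over $\Hone$ coincides with the one over $\Htwo$.

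Convexity of $\varphi$ is immediate: the first two terms are squared norms of affine maps of $K$, and the third is the square of a seminorm. For existence, I would take a minimizing sequence $\{K_n\}\subset\Htwo$; boundedness of $\varphi(K_n)$ together with $\|W^{-1}\|_\infty<\infty$ bounds $\|K_nS\|_2^2 + \|K_nM\|_2^2 = \|K_n H\|_2^2$ uniformly, and then (\ref{eq:FGepsilon}) gives $\|K_n\|_2 \le \varepsilon^{-1/2}\|K_nH\|_2$ uniformly bounded. Extracting a weakly convergent subsequence $K_n \rightharpoonup K^*$ in the Hilbert space $\Htwo$, I would note that each term of $\varphi$ is norm-continuous on $\Htwo$ (for the $\Lone$ term, Cauchy--Schwarz gives $\|WKHN\|_1 \le \|WHN\|_2\|K\|_2$) and convex, hence weakly lower semicontinuous, so $\varphi(K^*) \le \liminf_n \varphi(K_n)$ and $K^*$ attains the minimum. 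For uniqueness, completing the square on the first two terms yields $\|W(P-K)S\|_2^2 + \|WKM\|_2^2 = \|WH(K-K_0)\|_2^2 + \text{const}$ with $K_0 = P|S|^2/|H|^2$; since the invertibility of $W$ and (\ref{eq:FGepsilon}) force $|WH|$ to be bounded away from zero on $\T$, this functional is strictly convex in $K\in\Htwo$, and adding the convex third term makes $\varphi$ strictly convex.

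The optimal $(C,D)$ are then recovered directly from Lemma \ref{lem:factorization_coding}: if $K^*\ne 0$, the pair $(C,D)$ solves the original problem iff $DC = K^*$ with $C\in\Htwo$ and $C$ satisfies (\ref{optimalitycondition}), which reduces to (\ref{theocondition}) after substituting $|H| = \sqrt{|S|^2+|M|^2}$ and $\|WK^*HN\|_1 = \|WK^*N\begin{bmatrix}S & M\end{bmatrix}\|_1$. The $K^*=0$ case is handled directly by setting $D=0$ and choosing any $C$ satisfying (\ref{theoconstr}). The main obstacle I anticipate is the existence step: the $\|\cdot\|_1^2$ term in $\varphi$ is not generated by an inner product, so weak lower semicontinuity must be obtained via convexity plus norm continuity (Mazur's theorem) rather than by the direct Hilbert-space argument that handles the other two terms.
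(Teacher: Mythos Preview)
Your proposal is correct and follows essentially the same route as the paper: rewrite the constraint via Lemma~\ref{lem:Hdefine}, parametrize by $K=DC\in\Hone$, apply Lemma~\ref{lem:factorization_coding} for the inner minimization, and then argue existence/uniqueness by weak compactness and strict convexity. The only cosmetic difference is that the paper performs the substitution $X=WKH$, $R=WPSS^*H^{-*}$ so that strict convexity and the a~priori bound $\|X\|_2\le 2\|R\|_2$ are immediate from $\psi(X)=\|R-X\|_2^2+\sigma^{-2}\|XN\|_1^2$, whereas you obtain the same conclusions directly in the $K$ variable via the lower bound on $|WH|$; both arguments are equivalent.
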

\begin{proof}
Define $H\in\Hinf$ according to Lemma~\ref{lem:Hdefine}.
Then (\ref{theoconstr}) is equivalent to $\norm{CH}_2^2 \leq \sigma^2$.  
Define the sets
  \begin{align*}
    \Theta &= \left\{(C,D) : C,D\in{\Htwo},\
       \norm{CH}_2^2 \leq \sigma^2 \right\} \\
    \Theta(K) &= \left\{(C,D) : (C,D)\in \Theta,\ K=DC\right\}.
  \end{align*}
  Then the infimum of $J(C,D)$ subject to (\ref{theoconstr}) can be
  written
  \begin{align}
    &\inf_{C,D\in \Theta} J(C,D) \notag \\
    &= \inf_{K\in\Hone}
    \inf_{C,D\in\Theta(K)} J(C,D)\notag \\
    &= \inf_{K\in\Hone} \!\! \left( \! \norm{W(P-K)S}_2^2 + \! \norm{WKM}_2^2 + \!\!\!\!
      \inf_{C,D\in\Theta(K)} \!\! \norm{WDN}_2^2 \right)\notag\\
    &= \inf_{K\in\Hone} \norm{W(P-K)S}_2^2 +\norm{WKM}_2^2 +
    \frac{1}{\sigma^2} \norm{WKHN}_1^2 \notag \\
    &= \inf_{K\in\Hone} \varphi(K) \label{phiK}
  \end{align}
  The first equality is true by Theorem 17.10 in \cite{rudin86real}.
  The second equality follows because the first two terms in $\inf_{C,D\in\Theta(K)} J(C,D)$ are constant. The third equality follows from application of Lemma \ref{lem:factorization_coding} to perform the inner minimization. The final equality follows from (\ref{eq:HFG}).
  
  It will now be shown that the minimum is attained in (\ref{phiK}) by a
  unique $K\in\Htwo$. Completion of squares gives that
  \begin{align*}
    \varphi(K) &= \norm{W(P-K)S}_2^2 + \norm{WKM}_2^2 + \frac{1}{\sigma^2}\norm{WKHN}_1^2\\
    &=\norm{WPS}_2^2 + \norm{WKH}_2^2 \\ &- 2 \re \langle WPSS^*,WKHH^{-1} \rangle +
    \frac{1}{\sigma^2} \norm{WKHN}_1^2 \\
    &=\norm{WPSS^*H^{-*} - WKH}_2^2 + \frac{1}{\sigma^2}\norm{WKHN}_1^2 + \eta,
  \end{align*}
  where $\eta$ is a constant that does not depend on $K$.
  Let $X=WKH$ and $R=WPSS^*H^{-*}\in\Linf$.
  Minimizing $\varphi(K)$ over $K\in\Hone$ is then equivalent to minimizing
  \begin{align}\label{eq:coding_rho}
    \psi(X) = \norm{R-X}_2^2+\frac{1}{\sigma^2}\norm{XN}_1^2
  \end{align}
  over $X\in\Hone$. In the latter problem, it is sufficient to
  consider $X$ such that $\psi(X) \le \psi(0) = \norm{R}_2^2$. That is, only $X$ satisfying
  \begin{align*}
    \norm{X}_2 &= \norm{R-X-R}_2 
    \le \norm{R-X}_2 + \norm{R}_2 \\&\le \sqrt{\psi(X)}+\norm{R}_2 \le 2\norm{R}_2 \eqdef r.
  \end{align*}
  Now, in the weak topology, $\psi(X)$ is lower
  semicontinuous on $\Ltwo$ and the set $\left\{X:\norm{X}_2\le
    r\right\}$ is compact. This proves the existence of a minimum.
  The minimum is unique since $\psi(X)$ is strictly convex.
  Moreover, since $\norm{X}_2 \le r$, it is sufficient to minimize over $X\in\Htwo$ instead of $\Hone$.
  
  Suppose now that $X\in\Htwo$ minimizes $\psi(X)$. From $H^{-1},W^{-1}\in\Hinf$ it follows that
  $K=W^{-1}XH^{-1}\in\Htwo$ attains the infimum value in (\ref{phiK}) and that this value is equal to the minimum of (\ref{Kcost}). Since the minimum is attained in (\ref{phiK}) and, by Lemma~\ref{lem:factorization_coding}, there exists $(C,D)\in\Theta$ such that $J(C,D)=\varphi(K)$, it follows that the minimum of (\ref{theoobj}) subject to (\ref{theoconstr}) is attained.
  
The optimality condition (\ref{theocondition}) follows from the application of Lemma~\ref{lem:factorization_coding}, using that
  $|H|=\sqrt{|S|^2+|M|^2}$.
\end{proof}

\begin{rem}
  $\varphi(K)$ is convex, and $\varphi(\overline{K}) = \varphi(K)$. Thus, \[\varphi\left (\frac{K+\overline{K}}{2}\right ) \leq \frac{1}{2}\left (\varphi(K)+\varphi(\overline{K})\right ) = \varphi(K).\]
  Since the optimal $K$ is unique, this shows that the minimizing $K$ satisfies $K(e^{-i\omega}) = \overline{K(e^{i\omega})}$. Thus, $C$ can be chosen to have this property as well, meaning that $C$ can be approximated by a rational function with real coefficients. The same holds for $D$.
\end{rem}

\begin{rem}
  It was noted in Remark \ref{rem:coding_CD_nonunique} that the optimal factorization problem can have multiple solutions. To clarify, the optimal $K$ is unique but there are multiple factorizations of $K$ into $C$ and $D$ that achieve the minimum value of $J(C,D)$.
\end{rem}

It is noted that the solution of the problem essentially amounts to minimizing the sum of a 2-norm and a 1-norm of the decision variable. The 2-norm represents the cost in the Wiener filter problem, and the 1-norm represents the contribution of the channel noise to the error variance. The SNR $\sigma^2$ determines the relative importance of the two terms. For small SNR, the optimal $K$ will have small magnitude since the channel noise dominates the transmitted signal. As the SNR becomes larger, the magnitude of $K$ will become larger, and it will approach the Wiener filter in the limit when the SNR goes to infinity.

\subsection{Vector case}
In this section, the results in the previous section will be generalized to the case of vector-valued signals.

Consider again the system in Figure \ref{fig:coding_block_diagram} and assume that all signals are vector-valued and all systems are given by their corresponding transfer matrices. The number of elements in signal $s$ is denoted $n_s$ and so forth. That is, $s(k)\in\mathbb{R}^{n_s}$
Matrix dimensions are not explicitly stated in this section except when necessary. It is generally assumed that all matrices are of appropriate size.
In addition to all the assumptions made in the scalar case, it is now also assumed that:
\begin{enumerate}
\item The communication channel consists of $n_t$ parallel AWGN channels. The power constraint  (\ref{eq:scalar_power_constraint_time_domain}) is replaced by the total power constraint
  \begin{equation*}
    \E{t(k)^T t(k)} \leq \sigma^2.
  \end{equation*}  
\item All input signals in Fig. \ref{fig:coding_block_diagram} have identity covariance matrices. Moreover, 
$N(z)=W(z)=I$. That is, the channel noise is white with identity covariance and the frequency weight is uniform.
\item The number of elements in the signals satisfy 
  \begin{equation}\label{eq:coding_vector_lengths_constraint}
    n_t \geq \min\lbrace n_s, n_e\rbrace, 
  \end{equation}
  where $C$ is $n_t \times n_s$ and $D$ is $n_e \times n_t$.
  If the number of channels $n_t$ would be smaller than $n_f$ and $n_e$, then the product $DC$ could not have full rank. This means that optimization over $K=DC$ would have to include a rank constraint, which is very difficult to handle even in the static case.
\item The inequality (\ref{eq:FGepsilon}) is replaced by the matrix version
  \begin{align}\label{eq:FGepsilonMIMO}
    \exists \varepsilon > 0 \text{ such that } FF^*+GG^*\succeq \varepsilon I \text{ on }\T.
  \end{align}
\end{enumerate}

The objective is thus to minimize
\begin{equation*}
  J_v(C,D) = \norm{(P-DC)S}_2^2 + \norm{DCM}_2^2 + \norm{D}_2^2
\end{equation*} 
subject to 
\begin{equation}\label{eqcmSNRConstraint}
  \norm{CS}_2^2 + \norm{CM}_2^2 \leq \sigma^2
\end{equation}
The objective and the constraint are thus quite similar to the ones in the scalar case. It will be seen that the equivalent convex problem looks the same but that the optimality condition will, however, be more complicated. The main difference between the scalar and vector versions of the problem is that the optimal factorization (Lemma \ref{lem:factorization_coding}) is much more difficult to prove in the vector case.
\begin{lem}[Optimal factorization, vector case]\label{lem:factorization_MIMO}
  Suppose that $\sigma > 0$, $K \in \Hone$, that $H \in \Hinf$ is invertible in
  $\Hinf$ and that (\ref{eq:coding_vector_lengths_constraint}) holds. Then the optimization problem
  \begin{equation*}
    \minimize_{C,D \in \Htwo} \norm{D}_2^2
  \end{equation*}
  subject to
  \begin{equation*}
    K=DC, \quad \norm{CH}_2^2 \leq \sigma^2
  \end{equation*}
  attains the minimum value $\frac{1}{\sigma^2}\norm{KH}_1^2$.
  
  Moreover, suppose that $K$ is not identically zero and let \mbox{$K=K_i K_o$} be an inner-outer
  factorization and $K_o H = U_o \Sigma V^*$ be a singular value decomposition. Then $C,D \in \Htwo$ are optimal if and only if
  \begin{gather}
    K=DC,\quad \norm{CH}_2^2=\sigma^2,\\ \quad D D^* = \frac{\norm{KH}_1}{\sigma^2}
    K_i U_o\Sigma U_o^* K_i^*.
  \end{gather}

  If $K=0$ then the minimum is achieved by \mbox{$D=0$} and any
  function $C \in \Htwo$ that satisfies $\norm{CH}_2^2 \leq \sigma^2$.
\end{lem}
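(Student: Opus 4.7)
The plan is to mirror the scalar proof of Lemma \ref{lem:factorization_coding}, replacing pointwise absolute values by a pointwise singular value decomposition and pointwise scalar Cauchy-Schwarz by its trace-class version for matrices. The trivial case $K=0$ is handled exactly as before, so I would assume $K$ is not identically zero from here on.

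For the lower bound, I would apply the matrix trace-class inequality $\tr\sqrt{B^*A^*AB}\leq\norm{A}_F\norm{B}_F$ pointwise with $A=D(e^{i\omega})$, $B=(CH)(e^{i\omega})$, then integrate and use scalar Cauchy-Schwarz in $L^2(-\pi,\pi)$:
\begin{equation*}
\norm{KH}_1 = \norm{DCH}_1 \leq \intT \norm{D}_F\,\norm{CH}_F\,\frac{d\omega}{2\pi} \leq \norm{D}_2\,\norm{CH}_2 \leq \sigma\,\norm{D}_2,
\end{equation*}
yielding $\norm{D}_2^2 \geq \norm{KH}_1^2/\sigma^2$. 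I would then analyze the two equality conditions separately. The pointwise matrix step, combined with $DC=K$, forces $DD^*$ to be supported on the range of $K_i U_o$ with eigenvalues proportional to those of $\Sigma$; the scalar $L^2$ step forces $\norm{D(\omega)}_F\propto\norm{CH(\omega)}_F$. Together with the saturated constraint $\norm{CH}_2=\sigma$, this pins the proportionality constant to $\norm{KH}_1/\sigma^2$ and produces the stated identity $DD^*=(\norm{KH}_1/\sigma^2)K_i U_o\Sigma U_o^*K_i^*$.

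For existence, I would use the scalar factorization theorem in \cite{Wiener1959}, applied to $\log\tr\Sigma\in\Lone$ (finite because $KH\in\Hone$ is not identically zero, by Theorem~17.17 in \cite{rudin86real}), to produce an outer $\alpha\in\Htwo$ with $|\alpha|^2$ proportional to $\tr\Sigma$ and normalized so that $\norm{\alpha}_2=\sigma$. I would then combine $\alpha$ with $K_i$, $K_o$, $U_o$, $\Sigma$, $V$ and $H^{-1}$ to define a pair $(C,D)$ such that $DC=K$, the pointwise alignment conditions hold, and the power constraint is saturated. The dimensional hypothesis (\ref{eq:coding_vector_lengths_constraint}) is needed here to guarantee that such a factorization through intermediate dimension $n_t$ actually exists.

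The hard part will be showing that the constructed $C$ and $D$ lie in $\Htwo$, not merely in $\Ltwo$, since the SVD factors $U_o$, $V$, $\Sigma$ are in general only $\Linf$-valued and their analyticity is not guaranteed. I expect this to proceed as at the end of the scalar proof: first establish $\Ltwo$-membership by direct Frobenius-norm computations, then upgrade to $\Htwo$ by invoking Lemma \ref{lem:product_with_inverse_in_Hp} with the outer scalar $\alpha$, using the facts that $K\in\Hone$ and $\alpha\in\Htwo$ is outer. Tracking analyticity through the pointwise SVD is the most delicate point of the whole argument; beyond that, the reasoning is a careful matrix transcription of the scalar proof.
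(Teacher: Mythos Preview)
Your lower-bound argument via the pointwise trace inequality $\norm{DCH}_{S_1}\le\norm{D}_F\norm{CH}_F$ followed by scalar Cauchy--Schwarz is correct and genuinely different from the paper's route. The paper obtains the same value $\frac{1}{\sigma^2}\norm{KH}_1^2$ through Lagrangian duality: it sets up a Lagrangian in $(C,D,\lambda,\Phi)$, solves the dual problem, and then exhibits a saddle point. Your direct inequality is shorter for the bound itself; in exchange, the duality argument delivers the necessary form $DD^*=(\norm{KH}_1/\sigma^2)K_iU_o\Sigma U_o^*K_i^*$ cleanly from the saddle-point equations, whereas you must extract it from the equality cases of two nested Cauchy--Schwarz steps. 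That extraction is doable (equality in $\norm{AB}_{S_1}\le\norm{A}_F\norm{B}_F$ forces $A^*A$ and $BB^*$ to be proportional, and together with $DC=K$ this does pin down $DD^*$), but it needs more care than your sketch indicates.

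The existence step, however, has a real gap. A \emph{scalar} outer $\alpha$ with $|\alpha|^2\propto\tr\Sigma$ is not enough to build an analytic $D$ with the prescribed $DD^*$. The obstruction is exactly the one you flag: the SVD factors $U_o,\Sigma^{1/2}$ are merely measurable, so any formula of the type $D=K_iU_o\Sigma^{1/2}\cdot(\text{scalar})$ lands in $\Ltwo$ but not in $\Smirnov$, and Lemma~\ref{lem:product_with_inverse_in_Hp} is inapplicable because its hypothesis $X\in\Smirnov$ fails. What is actually required is a \emph{matrix} spectral factorization of $U_o\Sigma U_o^*$, and for the Wiener--Akutowicz theorem to apply one needs $\log\det(U_o\Sigma U_o^*)=\sum_k\log\sigma_k\in\Lone$, which is strictly stronger than your $\log\tr\Sigma\in\Lone$: it demands control of each $\log\sigma_k$ from below, not just of their sum. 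Establishing this uses that $K_oH$ is row outer and is the content of Lemma~\ref{lem:nonsquare_outer}. Once the matrix outer factor $D_o$ of $U_o\Sigma U_o^*$ exists, the paper takes $D=K_iD_o$ and $C=D_o^{-1}K_o$, and Lemma~\ref{lem:product_with_inverse_in_Hp} then applies with the \emph{matrix} outer $D_o$ to place $C$ in $\Htwo$. Your scalar $\alpha$ cannot play the role of $D_o$ here; this is the missing idea.
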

\begin{proof}
  If $K=0$ the proof is trivial, so assume that $K$ is not identically
  zero. Then neither $C$ nor $D$ are identically zero and
  $\alpha = \norm{CH}_2 > 0$. Now, suppose that
  $C, D$ are feasible and that $\alpha < \sigma$. Then
  \begin{align*}
    C_\alpha=\frac{\sigma}{\alpha}C, \quad D_\alpha =
    \frac{\alpha}{\sigma} D
  \end{align*} are feasible and $\norm{D_\alpha}_2 <
  \norm{D}_2$. Hence, a necessary condition for optimality is that
  \mbox{$\norm{CH}_2^2 = \sigma^2$}.

  The remainder of this proof is divided into three parts. First, the dual problem is considered. Then, it is shown that there is a saddle point and the optimality criteria are derived. Finally, existence of the solution is proven by construction. 

  \proofheader{Dual Problem:}
  In order to avoid dealing with analyticity constraints associated with $\Htwo$, the search will temporarily be relaxed to \mbox{$C, D \in \Ltwo$}. Later, it will be shown that there are \mbox{$C, D \in \Htwo$} that satisfy the derived optimality criteria. For $\lambda \geq 0$ and matrix-valued $\Phi \in \Linf$, introduce the Lagrangian
  \begin{align} 
    &L(C,D,\lambda,\Phi) = \norm{D}_2^2 + \lambda\left(\norm{CH}_2^2-\sigma^2\right) \notag \\
    &- \langle\re \Phi, \re DC-K\rangle \notag - \langle\im \Phi, \im DC-K\rangle \notag \\
    &= \norm{D}_2^2 + \lambda\left(\norm{CH}_2^2-\sigma^2\right) -
    \re\langle \Phi,
    DC-K\rangle \notag \\
    &= \! \intT \!\! \norm{D}_F^2 +
    \lambda\norm{CH}_F^2 - \re \tr  (\Phi^*(DC-K))
    \frac{d\omega}{2\pi} - \lambda\sigma^2 \label{eq:Lagrangian_integral}
  \end{align}
  The integrand in (\ref{eq:Lagrangian_integral}) can be rewritten as
  \begin{align}
    &{\norm{D}_F^2 + \lambda\norm{CH}_F^2
      - \re \tr \left(C\Phi^*D-\Phi^*K\right)}\notag \\
    &= \norm{D-\half\Phi C^*}^2_F \!\! + \lambda \norm{CH}_F^2 - \quarter \norm{C\Phi^*}_F^2 + \re \tr
    (\Phi^* K) \notag \\
    &=\norm{D-\half\Phi C^*}^2_F \!\!\! + \! \tr \! \left[C\!\left(\!\lambda H H^* - \quarter \Phi^*
          \Phi\!\right)\! C^* \! + \re \Phi^* K \right] 
            \label{eq:Lagrangian_integrand}
  \end{align}
  Only the first term depends on $D$. The contribution of this term is minimized by
  \begin{align}\label{eq:Dopt}
    D=\half \Phi C^*.
  \end{align}
  If (\ref{eq:Dopt}) holds, then $L$ only depends on $C$ through the first term inside the 
  brackets in (\ref{eq:Lagrangian_integrand}). Pointwise minimization of that term gives
  \begin{multline*}
    \inf_{C\in\Ltwo}\ \tr \left[C\left(\lambda H H^* - \quarter \Phi^*
        \Phi\right)C^*\right] \\ =
    \begin{cases}
      0, & 4\lambda H H^* \geq \Phi^* \Phi \text{ on } \T \\
      -\infty, & \text{otherwise.}
    \end{cases}
  \end{multline*}
  Moreover, the remaining term in (\ref{eq:Lagrangian_integrand}) can be written
  \begin{align*}
  \tr \left(\Phi^*K\right) = \tr \left(\Phi^*DC\right) = \half
    \tr\left(C\Phi^*\Phi C^*\right) = \half \norm{\Phi C^* }_F^2.
  \end{align*}
  Thus, $\tr \left(\Phi^*K\right)$ is real and non-negative, and
  \begin{align*}
    \inf_{C,D \in \Ltwo} \!\! L = \! 
    \begin{cases}
      \intT {\tr (\Phi^* K) \frac{d\omega}{2\pi}}
      -\lambda\sigma^2, & \!\! 4\lambda H H^* \! \geq \! \Phi^* \Phi \text{ on } \T \\ 
      -\infty, & \!\! \text{otherwise.}
    \end{cases}
  \end{align*}
  
  Introduce 
  \begin{equation*}
    \Psi = \frac{1}{2\sqrt{\lambda}} \Phi H^{-*}.
  \end{equation*}
  Then the dual problem can be written as
  \begin{align*}
    \maximize_{\lambda \geq 0, \Psi \in \Linf}\ 2\sqrt{\lambda}
    \intT {\tr \left(\Psi^* KH\right) \frac{d\omega}{2\pi}} -\lambda\sigma^2
  \end{align*}
  subject to 
  \begin{align}\label{eq:dualconstraint}
    \Psi^* \Psi \leq I \text{ on } \T.
  \end{align}

  The dual function is concave in $\lambda$. Letting $\lambda=0$ gives the value $0$. Since $\tr(\Psi^*KH)\geq0$ there exists $\lambda>0$ that gives a positive value, so the optimal $\lambda$ is given by the first-order condition
  \begin{align*}
    \left (\frac{1}{\sigma^2} \intT {\tr \left(\Psi^* KH\right) \frac{d\omega}{2\pi}}\right )^2 = \lambda,
  \end{align*}
  obtained by differentiation with respect to $\lambda$. With this $\lambda$ the dual problem simplifies to
  \begin{align}\label{eq:dualfunction}
    \maximize_{\Psi \in \Linf} \frac{1}{\sigma^2} \left (\intT {\tr \left(\Psi^* KH\right) \frac{d\omega}{2\pi}}\right )^2
  \end{align}
  subject to (\ref{eq:dualconstraint}).

  The integrand in (\ref{eq:dualfunction}) will now be maximized pointwise.
  Recall that $KH=K_i K_o H = K_i U_o \Sigma
  V^*$ and denote the number of rows of $K_o$ by $m$. Then $\Sigma$ is diagonal with diagonal elements $\sigma_k$, $k=1\ldots m$. Since $K$ is \mbox{$n_e\times n_s$} the rank of $K$ is not greater than $\min \lbrace n_e,n_s \rbrace$ and thus \begin{equation} \label{eq:coding_rankbound}
    m \leq \min \lbrace n_e,n_f \rbrace.
  \end{equation}
  
  $K_o$ is row outer by definition and $H$ is outer by Corollary 4.7 in \cite{Garnett}. It follows that $K_o H$ is row outer and thus has full row rank. It follows that the singular values are positive: \mbox{$\sigma_k >0$}, $k=1\ldots m$.
  Since $K_oH$ is wide (it has $n_s \geq m$ columns) it follows that $U_o$ is square and thus unitary.
  
  Define $U=K_i U_o$ and $\widetilde{\Psi} = U^* \Psi V$. Then it
  follows from (\ref{eq:dualconstraint}) and $UU^* \leq I$ that
  \begin{equation*}
    \widetilde{\Psi}^*\widetilde{\Psi} = V^*\Psi^*U U^* \Psi V \leq V^*\Psi^*
    \Psi V \leq V^* V =I.
  \end{equation*}
  Using $\widetilde{\Psi}$, an upper bound can be obtained for the integrand in (\ref{eq:dualfunction}):
  \begin{align*}
    \sup_{\Psi^* \Psi \leq I} \tr \left(\Psi^* KH\right) &=
    \sup_{\Psi^* \Psi \leq I} \tr \left(\Psi^* U\Sigma V^*\right) \\ 
    &= \sup_{\Psi^* \Psi \leq I} \tr \left(V^*\Psi^* U\Sigma \right) \\ 
    &\leq \sup_{\widetilde{\Psi}^* \widetilde{\Psi} \leq I} \tr \left(\widetilde{\Psi}^*\Sigma\right) \\
    &= \sum_{k=1}^m \sup_{\left|\widetilde{\Psi}_{kk}\right|\leq 1} \sigma_k \widetilde{\Psi}_{kk}
    = \sum_{k=1}^m \sigma_k
  \end{align*}
  The supremum is achieved if and only if $\widetilde{\Psi} = I$. Therefore, the upper bound is achieved by $\Psi$ if and only if $U^* \Psi V = I$ and $\Psi^*\Psi \leq I$. The set of $\Psi$ satisfying these conditions can be parametrized as:
  \begin{align}
    \Psi &= UV^* + \Psi_0 = K_i U_o V^* + \Psi_0 \label{eq:Psi0def} \\
    I &\geq \Psi^* \Psi \label{eq:PsiPsi},
  \end{align}
  where $\Psi_0$ satisfies
  \begin{equation}\label{eq:Psi0a}
    0=U^* \Psi_0 V = U_o^* K_i^* \Psi_0 V.
  \end{equation}
  Pre-multiplying (\ref{eq:Psi0a}) with $U_o$ gives the equivalent condition
  \begin{align}\label{eq:Psi0b}
    K_i^* \Psi_0 V &=0.
  \end{align}
  
  Choosing, for example, $\Psi_0 = 0$ gives $\Psi=UV^*$, which attains the upper bound. Hence, the value of the dual problem is  
  \begin{align*}
    &\max_{\Psi^* \Psi \leq I} \frac{1}{\sigma^2} \left(\intT \tr \left(\Psi^* KH\right) \frac{d\omega}{2\pi}\right)^2 \\
    &= \frac{1}{\sigma^2} \left( \intT \tr \left(VU^* U \Sigma V^*\right) \frac{d\omega}{2\pi}\right)^2    = \frac{1}{\sigma^2} \norm{KH}_1^2.
  \end{align*}
  
  The maximizing dual variables are given by
  \begin{align}\label{eq:Phiopt}
    \Phi = 2\sqrt{\lambda} \Psi H^* = 2\sqrt{\lambda} (K_i U_o V^* + \Psi_0) H^*
  \end{align}
  where $\Psi_0$ is such that (\ref{eq:Psi0def}), (\ref{eq:PsiPsi}) and (\ref{eq:Psi0b})
  hold, and 
  \begin{align}\label{eq:lambdaopt}
    \lambda = \left(\frac{1}{\sigma^2} \norm{KH}_1\right)^2.
  \end{align}
  
  \proofheader{Saddle Point:} 
  It will now be shown that there is a saddle point, which implies that
  the duality gap is zero. 

  In the following, assume that (\ref{eq:Psi0def}), (\ref{eq:PsiPsi}), 
  (\ref{eq:Psi0b}), (\ref{eq:Phiopt}) and (\ref{eq:lambdaopt})
  hold. Then $\lambda$ and $\Phi$ are dual feasible. The point
  $(C,D,\lambda,\Phi)$ is a saddle point if and only if $C,D \in
  \Htwo$ are primal feasible,
  \begin{align}\label{eq:saddle1}
    \lambda\left(\norm{CH}_2^2-\sigma^2\right) = 0
  \end{align}
  and
  \begin{align}\label{eq:saddle2}
    L(C,D,\lambda,\Phi) = \inf_{\widehat{C},\widehat{D} \in \Htwo} L(\widehat{C},\widehat{D},\lambda,\Phi).
  \end{align}
  The saddle point conditions imply that $\norm{CH}_2 = \sigma$ since
  $\lambda > 0$ and that $D=\half \Phi C^*$ as it was seen earlier
  that this follows from minimization of the Lagrangian.

  Suppose that the saddle point conditions hold. Then $C,D$ satisfy $K=DC$ and $D=\half \Phi C^*$. Moreover,
  \begin{align*}
    D D^* &= \half D C \Phi^* =\half K\Phi^*  
    =\sqrt{\lambda} K_i  K_o H (V U_o^*K_i^*+\Psi_0^*)\\
    &= \sqrt{\lambda} (K_i U_o\Sigma U_o^* K_i^* + K_i U_o \Sigma V^* \Psi_0^*).
  \end{align*}
  Clearly, $DD^*$ and $K_i U_o\Sigma U_o^* K_i^*$ are Hermitian. Accordingly,
  \begin{equation*}
    A=K_i U_o \Sigma V^* \Psi_0^*
  \end{equation*}
  must be Hermitian. Now, by
  (\ref{eq:Psi0b}),
  \begin{gather*}
    A K_i = K_iU_o \Sigma V^* \Psi_0^* K_i = 0 \nonumber\\
    \Rightarrow 0 = AK_i = A^* K_i = \Psi_0 V \Sigma U_o^* K_i^*K_i =
    \Psi_0 V \Sigma U_o^*. \nonumber
  \end{gather*}
  Hence, $A=0$ and
  \begin{align}\label{eq:DD}
    D D^* = \sqrt{\lambda} K_i U_o\Sigma U_o^* K_i^* = \frac{\norm{KH}_1}{\sigma^2} K_i U_o\Sigma U_o^* K_i^*.
  \end{align}

  Suppose instead that $C,D \in \Htwo$ satisfy $K=DC$, $\norm{CH}_2 =
  \sigma$ and (\ref{eq:DD}). Then $C,D$ are primal feasible and
  (\ref{eq:saddle1}) is satisfied. Moreover, 
  \begin{align*}
    L(C,D,\lambda,\Phi)&=\norm{D}_2^2 =
    \sqrt{\lambda}\intT \tr
    \left( K_i U_o\Sigma U_o^* K_i^* \right) \frac{d\omega}{2\pi} \\
    &= \sqrt{\lambda}\intT \tr
    \left(\Sigma\right) \frac{d\omega}{2\pi} = \frac{1}{\sigma^2}\norm{KH}_1^2,
  \end{align*}
  so (\ref{eq:saddle2}) holds and thus the saddle point conditions are
  satisfied. Since these assumptions and the saddle point conditions
  imply each other, they are equivalent.

  To conclude, it has been shown that $(C,D,\lambda,\Phi)$ is a saddle point, which implies that $C,D\in \Htwo$ achieve the claimed minimum, if and only if $K=DC$, $\norm{CH}_2^2 = \sigma^2$ and (\ref{eq:DD}) holds.

  \proofheader{Existence of Solution:} 
  Define $B=\sqrt{\lambda} U_o\Sigma U_o^* \in \Lone$, which is
  Hermitian with real diagonal. Recall that $K_o H$ is row outer with singular values \mbox{$\sigma_k >0$}, $k=1\ldots m$. From this and \mbox{Lemma \ref{lem:nonsquare_outer}} it follows that $\log \sigma_k \in \Lone$. Since $U_o$ is unitary it also follows that $B$ is positive definite. Moreover, 
  \begin{align*}
    \log \det B = \frac{m}{2}\log \lambda + \sum_{k=1}^m \log \sigma_k \in \Lone
  \end{align*}
  Therefore, according to the theorem in \cite{Wiener1959}, there is
  an outer transfer matrix $D_o \in \Htwo$ such that \mbox{$B = D_o
    D_o^*$}. 
  Let $\widetilde{D}=K_i D_o \in \Hardy{2}$ and \mbox{$\widetilde{C}=D_o^{-1} K_o$}. Then 
  \begin{align*}
    \widetilde{C} &= D_o^{-1}K_o H H^{-1} 
    = D_o^{-1} U_o \Sigma V^* H^{-1}\\
    &= D_o^{-1} U_o \Sigma U_o^* U_o V^* H^{-1}
    = \frac{1}{\sqrt{\lambda}} D_o^* U_o V^* H^{-1} \in \Lebesgue{2}
  \end{align*}
  Since $D_o$ is outer it follows from Lemma \ref{lem:product_with_inverse_in_Hp} that $\widetilde{C}\in \Hardy{2}$. 

  It can now be verified that $\widetilde{C}$ and $\widetilde{D}$ satisfy the
  optimality conditions:
  \begin{align*}
    \widetilde{D} \widetilde{C} = K_i D_o D_o^{-1} K_o = K_i K_o = K,
  \end{align*}
  \begin{align*}
    \norm{\widetilde{C}H}_2^2 \! &= \norm{D_o^{-1} K_o H}_2^2 = 
    \! \intT \! \! \tr \left( H^* K_o^* D_o^{-*} D_o^{-1} K_o H \right)
    \frac{d\omega}{2\pi} \\ 
    &= \intT \tr \left( V \Sigma U_o^* B^{-1}
      U_o \Sigma V^* \right) \frac{d\omega}{2\pi} \\
    &= \frac{1}{\sqrt{\lambda}} \intT \tr \left(\Sigma \right) \frac{d\omega}{2\pi}
    = \sigma^2
  \end{align*}
  and
  \begin{align*}
    \widetilde{D}\widetilde{D}^* = K_i D_o D_o^* K_i^* = \sqrt{\lambda} K_i
    U_o\Sigma U_o^* K_i^*.
  \end{align*}

  If the rank of $K$ does not equal $n_t$, then $\widetilde{C}$ and $\widetilde{D}$ are not of
  the required dimensions. $\widetilde{C}$ is $m \times n_s$ and
  $\widetilde{D}$ is $n_e \times m$, where, by (\ref{eq:coding_vector_lengths_constraint}) and (\ref{eq:coding_rankbound}), $m \leq \min\{n_e,n_f\}
  \leq n_t$. It is required that $C$ is $n_t \times n_s$ and that $D$ is $n_e \times n_t$. To solve this problem, let
  \begin{align*}
    D=
    \begin{bmatrix}
      \widetilde{D} & 0_{n_e \times n_t-m}
    \end{bmatrix} \in \Htwo, \quad C =
    \begin{bmatrix}
      \widetilde{C} \\ 0_{n_t-m \times n_s}
    \end{bmatrix} \in \Htwo.
  \end{align*}
  Noting that $DC=\widetilde{D}\widetilde{C}=K$, that $\norm{CH}_2 = \norm{\widetilde{C}H}_2 $ and that $DD^* = \widetilde{D}\widetilde{D}^*$
  it is \emph{finally} concluded that $C,D$ are optimal. 
\end{proof}

Just as in the scalar case, the solution to the optimal factorization problem can be used to find an equivalent convex problem. This problem looks exactly the same both cases. The theorem for the vector case is now stated.
\begin{thm}\label{thm:coding_equivalence_MIMO}
  Suppose that $\sigma > 0$, $S, M, P \in \Hinf$ and that (\ref{eq:coding_vector_lengths_constraint}) and (\ref{eq:FGepsilonMIMO}) hold.
  Then the optimization problem
  \begin{equation}\label{theoobj_MIMO}
    \minimize_{C,D\in\Htwo} J(C,D)
  \end{equation}
  subject to
  \begin{equation}\label{theoconstr_MIMO}
    \norm{CS}_2^2 + \norm{CM}_2^2 \leq \sigma^2
  \end{equation}
  attains a minimum value that is equal to the minimum of the convex optimization problem
  \begin{align}\label{Kcost_MIMO}
    \minimize_{K\in\Htwo} \norm{(P-K)S}_2^2 + \norm{KM}_2^2 + \frac{1}{\sigma^2} \norm{K       \begin{bmatrix}S & M\end{bmatrix}}_1^2,
  \end{align}
  which is attained by a unique minimizer. 
  
  Moreover, suppose $K\in\Htwo$ is a solution to (\ref{Kcost_MIMO}). If $K$ is not identically zero, then  $C, D\in \Htwo$ solve (\ref{theoobj_MIMO}) subject to (\ref{theoconstr_MIMO}) if and only if   \begin{gather*}
    K=DC,\quad \norm{C
      \begin{bmatrix}
        S & M
      \end{bmatrix}
    }_2^2=\sigma^2, \\
    D D^* = \frac{\norm{K
        \begin{bmatrix}
         S & M
        \end{bmatrix}
      }_1}{\sigma^2}  K_i U_o\Sigma U_o^* K_i^*,
  \end{gather*}
  where $K_i$ is defined by an inner-outer factorization \mbox{$K=K_i K_o$} and
  $U_o$ and $\Sigma$ are given by a singular value decomposition \mbox{$K_o H = U_o
    \Sigma V^*$}, where $H \in \Hinf$ satisfies $H^{-1} \in \Hinf$ and
  \mbox{$HH^*=SS^* + MM^*$}. 

  If $K=0$, then the solution to (\ref{theoobj_MIMO}) and (\ref{theoconstr_MIMO}) is given by
  $D=0$ and any function $C\in\Htwo$ that satisfies (\ref{theoconstr_MIMO}). 
\end{thm}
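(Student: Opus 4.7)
The plan is to mirror the proof of Theorem~\ref{thm:coding_equivalence_SISO} step by step, replacing Lemma~\ref{lem:factorization_coding} with Lemma~\ref{lem:factorization_MIMO} and verifying that each manipulation still goes through in the matrix setting. First, I would construct $H\in\Hinf$ with $H^{-1}\in\Hinf$ satisfying $HH^*=SS^*+MM^*$; the matrix version of the spectral factorization argument used in Lemma~\ref{lem:Hdefine} applies because (\ref{eq:FGepsilonMIMO}) bounds $SS^*+MM^*$ away from zero while $S,M\in\Hinf$ bounds it above, so $\log\det(SS^*+MM^*)\in\Lone$ and the theorem of \cite{Wiener1959} yields the required outer $H$. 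The power constraint (\ref{theoconstr_MIMO}) then rewrites as $\norm{CH}_2^2\le\sigma^2$, and cyclicity of trace shows $\norm{K\begin{bmatrix}S & M\end{bmatrix}}_1=\norm{KH}_1$, since both equal $\intT \tr\sqrt{KHH^*K^*}\,\frac{d\omega}{2\pi}$.

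Next, I would introduce $K=DC\in\Hone$ as an intermediate variable and decompose the minimization as a nested infimum. Because $J(C,D)=\norm{(P-K)S}_2^2+\norm{KM}_2^2+\norm{D}_2^2$ and the first two terms depend only on $K$, the inner minimization over $(C,D)\in\Htwo\times\Htwo$ with $DC=K$ and $\norm{CH}_2^2\le\sigma^2$ is precisely the setting of Lemma~\ref{lem:factorization_MIMO}, whose optimal value is $\frac{1}{\sigma^2}\norm{KH}_1^2$. The outer problem then matches (\ref{Kcost_MIMO}) after identifying the two equivalent 1-norms above. Using cyclicity of trace together with $HH^*=SS^*+MM^*$, I would complete the square to obtain
\begin{equation*}
\norm{(P-K)S}_2^2 + \norm{KM}_2^2 = \norm{PSS^*H^{-*} - KH}_2^2 + \eta
\end{equation*}
for a constant $\eta$ independent of $K$, and setting $R=PSS^*H^{-*}\in\Linf$ and $X=KH$ reduces everything to minimizing $\psi(X):=\norm{R-X}_2^2+\frac{1}{\sigma^2}\norm{X}_1^2$ over $X\in\Hone$.

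Existence and uniqueness of the minimizer of $\psi$ then follow verbatim from the scalar argument: $\psi$ is strictly convex and weakly lower semicontinuous on $\Ltwo$, and any minimizing sequence is bounded by $\norm{X}_2\le 2\norm{R}_2$ (from $\psi(X)\le\psi(0)=\norm{R}_2^2$ and the triangle inequality). Weak compactness of the corresponding ball in $\Ltwo$ yields a unique minimizer $X\in\Htwo$, so $K=XH^{-1}\in\Htwo$ since $H^{-1}\in\Hinf$. Applying Lemma~\ref{lem:factorization_MIMO} to this $K$ produces $(C,D)\in\Htwo\times\Htwo$ attaining the joint infimum, and the optimality conditions claimed in the theorem are inherited directly from the ``only if'' part of that lemma; the $K=0$ case is trivial from the explicit formula for $J(C,D)$.

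The truly hard work---the matrix factorization with its delicate saddle-point argument---has already been absorbed into Lemma~\ref{lem:factorization_MIMO}, so no substantial obstacle remains. The only care needed is to verify the matrix completion-of-squares identity, which hinges solely on cyclicity of the trace and on $HH^*=SS^*+MM^*$, and to check that the scalar weak-compactness and strict-convexity argument for $\psi$ transfers without modification to the matrix-valued setting.
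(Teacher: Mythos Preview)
Your proposal is correct and follows essentially the same approach as the paper's own proof, which simply states that Lemma~\ref{lem:Hdefine} holds in the matrix case under (\ref{eq:FGepsilonMIMO}) and that the rest of the argument is identical to Theorem~\ref{thm:coding_equivalence_SISO} with Lemma~\ref{lem:factorization_MIMO} in place of Lemma~\ref{lem:factorization_coding}. You have in fact spelled out more detail than the paper does, explicitly verifying the trace-norm identity $\norm{K\begin{bmatrix}S & M\end{bmatrix}}_1=\norm{KH}_1$ and the matrix completion of squares, both of which the paper leaves implicit.
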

\begin{proof}
  With the assumption (\ref{eq:FGepsilonMIMO}), Lemma \ref{lem:Hdefine} holds in the matrix case as well. The rest of the proof is identical to the proof of Theorem \ref{thm:coding_equivalence_SISO}, except that Lemma \ref{lem:factorization_MIMO} is used instead of Lemma \ref{lem:factorization_coding}, with the obvious implications for the optimality conditions.
\end{proof}

\begin{rem}
  The assumption (\ref{eq:coding_vector_lengths_constraint}) may deserve some explanation. If
  there are too few communication channels relative to the dimensionality of $s$ and $e$, the maximum rank of the product $DC$ may be smaller than the smallest dimension of $K$. Then not all $K$ would be realizable as a product of $D$ and $C$, and a rank
  condition would have to be imposed on $K$ in Theorem \ref{thm:coding_equivalence_MIMO}. In principle, this changes nothing, but the assumption is included in order to avoid formulating the solution in terms of an optimization problems that cannot be reliably solved.
\end{rem}

\subsection{Optimal LTI Filters Require Infinite Memory}
The structure of optimal linear encoders and decoders will now be studied. In particular, it will be shown that the optimal filters generally have non-rational transfer functions. This corresponds to systems with infinite memory, since it is generally impossible to find a finite dimensional state-space realization of such transfer functions.

We consider the scalar case with white channel noise and rational $S,M,P$ and $W$. This implies that $N(z)=1$ and that $R=WPSS^*H^{-*}$ is rational, where $H$ satisfies (\ref{eq:HFG}). Since $S,M,P$ and $W$ are proper, it can safely be assumed that $R$ is proper. If $R$ is not proper then it can be made proper by multiplying $H$ with $z^{-k}$, for a large enough $k$. 

If we define
\begin{equation}
\psi(X) = \norm{R-X}_2^2+\frac{1}{\sigma^2}\norm{X}_1^2,
\end{equation}
the solution is given by solving the problem
 \begin{equation}
   \minimize_{X\in\Htwo} \psi(X).
   \label{eq:minsum}
 \end{equation}
Recall that the minimum of (\ref{eq:minsum}) is attained and that it is a strictly convex problem. It will now be shown that a necessary condition for the minimum cannot be satisfied by a rational $X$ except in some special cases. To begin with, two simple observations are made:
 
 \begin{enumerate}
 \item
 If the solution $X$ to (\ref{eq:minsum}) is a rational function it can be factorized into inner-outer factors as $X=FX_o$. The outer factor $X_o$ is then a rational function that solves the optimization problem
 \begin{equation}
 \minimize_{X_o\in\Htwo}\norm{F^*R-X_o}_2^2+\frac{1}{\sigma^2}\norm{X_o}_1^2
 \end{equation}
 where $F^*R$ is a rational function. Thus, we can assume without loss of generality that the optimal solution $X$ is outer.
\item Due to the orthogonality
 \begin{equation}
 \norm{R-X}_2^2=\norm{R_-}_2+\norm{R_+-X}_2^2,
 \end{equation}
 where $R=R_++R_-$ is a decomposition of $R$ into the analytical and anti-analytical parts, respectively, we can also assume that $R$ is analytical since the anti-analytical part does not affect the optimization. That is, $R=R_+$.
 \end{enumerate}

  Another assumption we make to simplify the proof is that the function $R$ has only simple poles. Note that the poles of $P_+(F^*R)$ are the same as of $P_+R$, so the simplicity of the poles remains true through the two rewritings above.
 
 \begin{thm}
   Consider the problem (\ref{eq:minsum}) with a proper non-constant rational function $R\in\Htwo$ and assume that the poles of $R$ are simple and that the optimal solution $X$ is not identically zero. Then $X$ is not a rational function.
 \end{thm}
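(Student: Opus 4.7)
My plan is to derive the first-order optimality condition for the strictly convex problem (\ref{eq:minsum}) and then, via a comparison of pole orders on both sides of that condition, to rule out every rational candidate for $X$.

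First I would compute the G\^{a}teaux derivative of $\psi$ at $X\in\Htwo$ with $X\not\equiv 0$. Since $\|X\|_1$ is differentiable at such $X$ with directional derivative $\re\langle X/|X|,Y\rangle$ in direction $Y$, stationarity applied to both $Y$ and $iY$ in $\Htwo$ gives
\begin{equation}\label{eq:plan-FOC}
R-X \;=\; \frac{\|X\|_1}{\sigma^{2}}\,P_+\!\left(\frac{X}{|X|}\right),
\end{equation}
where $P_+$ denotes the orthogonal projection of $\Ltwo$ onto $\Htwo$; its complement $P_-=I-P_+$ produces functions that extend analytically into the open unit disk. By the reductions discussed just before the theorem I may assume that $R\in\Htwo$ is rational with simple poles and that $X$ is outer; supposing for contradiction that $X$ is rational as well, the left side of (\ref{eq:plan-FOC}) is rational, and hence so is $P_+(X/|X|)$.

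The identity that drives the rest of the argument is the pointwise relation on $\T$,
\[
\left(\frac{X}{|X|}\right)^{\!2} \;=\; \frac{X}{X^{*}},
\]
whose right-hand side is a rational function of $z$. Writing $X/|X|=P_+(X/|X|)+P_-(X/|X|)$ on $\T$, the first summand is rational by assumption (meromorphic on $\C$), and $P_-(X/|X|)$ extends analytically into the interior of the disk. Both sides of the squared identity therefore extend meromorphically from $\T$ into the disk, so by the identity theorem
\[
\bigl[P_+(X/|X|)+P_-(X/|X|)\bigr]^{2} \;=\; X/X^{*}
\]
holds throughout the disk wherever defined.

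The step I expect to be the main obstacle is the local pole-order bookkeeping at a pole of $X$ strictly inside the disk, which leverages all of the ingredients above. Because $X$ is outer, $X^{*}$ is analytic and non-vanishing in the open disk, so at any pole $\beta\in\D$ of $X$ of order $m\geq1$ the quotient $X/X^{*}$ has a pole of order exactly $m$. On the left of the squared identity, $P_-(X/|X|)$ is analytic at $\beta$, and if $P_+(X/|X|)$ has a pole of order $k$ at $\beta$ then the squared sum has a pole of order $2k$, forcing $m=2k$. Substituting $k=m/2$ into (\ref{eq:plan-FOC}) and comparing pole orders at $\beta$ gives $m=m/2$, which is impossible for $m\geq 1$. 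Hence a rational $X\in\Htwo$ can have no pole in the open disk and so must be constant; but then the right-hand side of (\ref{eq:plan-FOC}) is also constant, forcing $R-X$ to be constant and contradicting the non-constancy of $R$.
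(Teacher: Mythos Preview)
Your argument is correct and takes a genuinely different route from the paper's. Both proofs start from the same Euler--Lagrange equation and invoke the same preliminary reductions to outer $X$ and analytic $R$. From there the paper proceeds algebraically: it first shows that $|X|$ itself must be rational, deduces $X=h^{2}$ for a rational outer $h$, writes $h=p/q$ and $R=b/a$, and via a partial-fraction computation of $P_{+}(h/h^{*})$ arrives at the polynomial identity $a=q^{2}$, contradicting the simplicity of the poles of $R$. Your approach bypasses the rationality of $|X|$ and the factorization $X=h^{2}$ entirely: squaring the unimodular function $X/|X|$ produces the rational object $X/X^{*}$ directly, and because the outer hypothesis forces equal numerator and denominator degrees, $X^{*}$ is indeed analytic and zero-free in $\D$, so the pole-order comparison at each pole of $X$ goes through. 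The parity obstruction you find ($m=2k$ from the squared identity versus $m=k$ from the Euler--Lagrange equation, with the simple-pole hypothesis on $R$ ruling out cancellation) is morally the same ``denominator must be a square'' phenomenon the paper extracts, but you reach it without any polynomial arithmetic.

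One step deserves an extra line of justification: the passage from the boundary identity $(P_{+}+P_{-})^{2}=X/X^{*}$ on $\T$ to the same identity inside $\D$ is not literally the classical identity theorem, since $P_{-}(X/|X|)$ need not continue analytically across $\T$. The clean fix is to observe that $P_{-}(X/|X|)=X/|X|-c^{-1}(R-X)$ is bounded on $\T$ and hence lies in $H^{\infty}(\D)$; clearing the finitely many poles of $P_{+}$ and of $X/X^{*}$ by a polynomial factor then gives a bounded analytic function on $\D$ with zero nontangential boundary values, which is identically zero. With that detail supplied, your proof is complete and arguably more transparent than the paper's partial-fraction computation.
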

 \begin{proof}
   We split the proof into several steps to underline the structure.
   
   \proofheader{Step 1}: Calculate the first variation of the functional $\psi$ and state the Euler-Lagrange equation. The standard differentiation of $\psi(X+\epsilon h)$ with respect to $\epsilon$ and then setting $\epsilon=0$ gives
   $$
   \delta\psi(h)=2\re\intT\biggl(\frac{\norm{X}_1}{\sigma^2} \frac{X}{|X|}-(R-X)\biggr)^*h\,\frac{d\omega}{2\pi}
   $$
For convex problems the necessary and sufficient condition for the minimum is that $\delta\psi(h)=0$ for all $h\in\Htwo$. It gives the Euler-Lagrange equation for the optimal $X$ as
   $$
   \frac{\norm{X}_1}{\sigma^2} P_+\frac{X}{|X|}=R-X
   $$
   where $P_+$ is the standard orthogonal projection from $\Ltwo$ to $\Htwo$. Note that the constant $\norm{X}_1\sigma^{-2}$ can be incorporated into $X$ and $R$ without affecting their rationality. So in the following we assume without loss of generality that this constant is equal to $1$ and analyze the equation
   \begin{equation}
     P_+\frac{X}{|X|}=R-X.
     \label{eq:EL}
   \end{equation}
   Since $X$ is not identically zero, it is not zero almost everywhere on $\T$ and the fraction $\frac{X}{|X|}$ is well defined.
   
   \proofheader{Step 2}: We will now assume that the solution $X\in\Htwo$ to (\ref{eq:EL}) is rational and show that it will lead us to a contradiction. In this step we prove that rationality of $R$ and $X$ in (\ref{eq:EL}) implies rationality of $|X|$. Indeed
   $$
   |X|=X^*\frac{X}{|X|}=X^*P_+\frac{X}{|X|}+X^*P_-\frac{X}{|X|}.
   $$
   The second term in the right hand side is anti-analytical, hence
\begin{equation}
   P_+|X|=P_+\bigl(X^*P_+\frac{X}{|X|}\bigr). \label{eq:projection_of_abs_X}
\end{equation}
   Clearly $P_+X|X|^{-1}$ is rational due to (\ref{eq:EL}) and thus the right hand side of (\ref{eq:projection_of_abs_X})   
   is rational too. Accordingly, $P_+|X|$ is also rational. Furthermore, the function $|X|$ is real and has a symmetric Laurent series. Therefore, the function $|X|$ must be rational itself. 
   
   Factorization as
   $|X|=h^*h=|h|^2=|h^2|$ with an outer rational $h\in\Htwo$ and assuming, as was explained previously, that $X$ is outer, gives the only possibility that $X=h^2$. That is, the rational solution $X$ must be a square of a rational function.
   
   \proofheader{Step 3}: Rewrite the Euler-Lagrange equation in terms of $h$ and then in terms of numerators and denominators of $h$ and $R$. Substituting $X=h^2$ into (\ref{eq:EL}) gives
   $$
   P_+\frac{X}{|X|}=P_+\frac{h^2}{h^*h}=P_+\frac{h}{h^*}=R-h^2.
   $$
   Introduce the notations for the numerators and the denominators
   $$
   h=\frac{p}{q},\quad R=\frac{b}{a}
   $$
   where $a$, $b$, $p$ and $q$ are polynomials. The polynomials $a$, $p$ and $q$ are stable by definition, since $R\in\Htwo$ and $h$ is outer in $\Htwo$. Introduce the notation for the conjugate polynomial to $p$ as
   $$
   \tilde p(z)=z^n p(z^{-1})
   $$
   where $n$ is the degree of $p$. The conjugate of a stable polynomial has the same degree and is anti-stable.
   With these notations in mind the Euler-Lagrange equation becomes
   \begin{equation}
     P_+\frac{p\tilde q z^{n-m}}{q\tilde p}=\frac{b}{a}-\frac{p^2}{q^2}=\frac{bq^2-ap^2}{aq^2}.
     \label{eq:ELnd}
   \end{equation}
   Here $n$ and $m$ are degrees of $p$ and $q$ respectively.
   
   \proofheader{Step 4}: Calculate the projection in the left hand side of (\ref{eq:ELnd}) and state the polynomial version of the Euler-Lagrange equation. We assume now that $n-m\ge 0$ and cover the opposite in the next step. Perform the partial fraction decomposition
   $$
   \frac{p\tilde q z^{n-m}}{q\tilde p}=\frac{Q}{q}+\frac{r}{\tilde p}
   $$
   where $Q$ is a polynomial and the degree of $r$ is less than $n$. Then
   $$
   P_+\frac{p\tilde q z^{n-m}}{q\tilde p}=\frac{Q}{q}
   $$
   and the equation (\ref{eq:ELnd}) becomes
   $$
   aqQ=bq^2-ap^2.
   $$
   Clearly $q(z)=0$ implies $a(z)=0$ since $p$ and $q$ are prime, hence $a=qa_0$ where $a_0$ is a polynomial. Canceling
   $q$ above we get
   $$
   a_0qQ=bq-a_0p^2.
   $$
   Similarly $q(z)=0$ implies $a_0(z)=0$ and thus $a_0=qa_1$. Canceling again gives
   $$
   a_1qQ=b-a_1p^2.
   $$
   Now it is clear that $a_1=1$ since otherwise $a_1(z)=0$ would give $b(z)=0$, which is impossible since $a$ and $b$ are also prime. Finally, we have $a=q^2$, which contradicts the assumption that zeros of $a$ are simple unless $q=a=1$. But for a proper non-constant $R$ it is impossible.
   
   \proofheader{Step 5}: The case $n-m<0$ is similar. Denote $k=m-n$. The only difference is in the partial fraction decomposition 
   $$
   \frac{p\tilde q}{q\tilde p z^k}=\frac{Q}{q}+\frac{r}{\tilde pz^k}
   $$
   where $Q$ is a polynomial and the degree of $r$ is less than $n+k=m$. The rest is exactly the same as in Step~4 with the same conclusion that $a=q^2$ which contradicts the assumption.
 \end{proof}
 
Because $S,M$ and $W$ are assumed to be rational and $X=WKH$ it follows that $K$ is rational if and only if $X$ is rational. Clearly, if $K$ is not rational, it cannot be factorized as $K=DC$ with rational $C$ and $D$. Thus, the transfer functions of optimal LTI encoders and decoders are not rational. 

As explained previously, this means that the filters can not be realized using finite memory. Obviously, approximations have to be done for practical implementation. For example, impulse responses of the filters may be truncated. It remains to investigate the impact on the performance of such approximations.

If the channel has noise-free feedback, that is, if $C$ has access to the channel output, then $C$ can estimate the states of $D$ exactly. It would be interesting to study if the memory of optimal linear encoders and decoders could be bounded in this case. Such a result would also be in line with the structural result for causal coders in \cite{Witsenhausen1979}, where the memory was bounded given that the encoder has knowledge of the decoder state.

\section{Numerical Solution}\label{sec:numerics}
A procedure for obtaining an approximate numerical solution will now be outlined for the vector version of the problem. 
\begin{enumerate}
\item The first step is to solve the optimization problem (\ref{Kcost_MIMO}) or, alternatively, minimize (\ref{eq:coding_rho}) (the constant part $\eta$ must then be added to obtain the distortion). An approximate solution can be obtained by using a finite basis representation of $K$ and approximating the integrals by sums over a finite number of frequency grid points. Such an approximated problem can be cast as a quadratic program with second-order cone constraints. 
\item Perform a matrix spectral factorization of $SS^*+MM^*$ to obtain $H\in\Hinf$ with $H^{-1}\in\Hinf$. \item Perform an inner-outer factorization of $K$ to obtain $K_iK_o=K$.
\item Perform a singular value decomposition of $K_oH$ to obtain $U_o\Sigma V^*=K_oH$. 
\item Use a finite basis approximation $A(\omega)$ of $DD^*$, for example using the parametrization 
  \begin{equation*}
    A(\omega) =A_0 + \sum_{k=1}^{N_c} A_k \left(e^{ki\omega} + e^{-ki\omega}\right)
  \end{equation*}
  and fit $A(\omega)$ to 
  \begin{equation*}
    \frac{\norm{K \begin{bmatrix} S & M \end{bmatrix}}_1}{\sigma^2}  K_i U_o\Sigma U_o^* K_i^*,
  \end{equation*}
  by minimizing the deviation in some suitable norm.
\item Perform a spectral factorization of $A(\omega)$, choosing $D_o$ as the stable and outer spectral factor.
\item Let $D=K_iD_o$ and $C=D_o^{-1}K_o$.
\item If $C$ and $D$ are of incorrect size, add rows of zeros to $C$ and columns of zeros to $D$ until they are of correct size.
\end{enumerate}
In the scalar case, the procedure is simplified as follows: Step 2 and 6 requires only scalar spectral factorizations, step 3, 4 and 8 are skipped and step 5 consists of fitting $A(\omega)$ to 
 \begin{equation*}
    \frac{\norm{K \begin{bmatrix} S & M \end{bmatrix}}_1}{\sigma^2} |KH|.
  \end{equation*}

\subsection{Example}
The numerical solution is illustrated by the following example. Consider the problem with $S=1/(z-0.9)$, $M=0$, $W=N=1$ and $P=z^{-d}$. The functional $\psi(X)$, given by (\ref{eq:coding_rho}), was approximated by discretization of the integrals over $4000$ 
grid points, uniformly placed on the unit circle. $X$ was parametrized as an FIR filter with $60$ coefficients. The minimization was then carried out for different SNR levels $\sigma^2$ and delays $d$, using 
Matlab, Yalmip \cite{yalmip} and SeDuMi \cite{sedumi}. 
    
The resulting MSE distortion levels are displayed in Fig.~\ref{fig:performance} together with the OPTA for the case with no delay constraint, obtained from (\ref{eq:minimum_distortion}). It can be seen that for small SNR's, the distortion is very close to the lower bound. This is not surprising since for zero SNR, the minimum distortion is $\norm{WPS}_2^2=\norm{S}_2^2$ over any type of coding system. 
For medium SNR's, the distortion is lower for longer delays. The difference seems, however, to decrease when the SNR becomes larger. The gap to the OPTA seems to approach about a factor two for high SNR's, regardless of delay. This suggests that for this source, it is the linearity, rather than the delay constraint, that is the performance-limiting factor for high SNR levels.

\begin{figure}[tb]
  \centerline{\includegraphics[width=1\hsize]{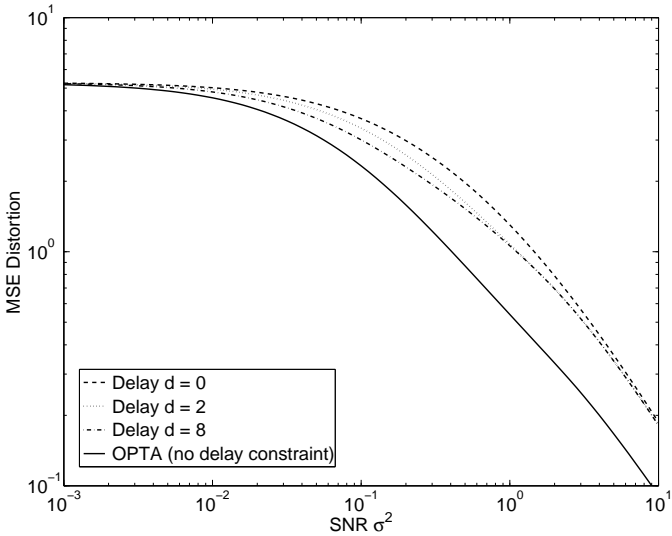}} 
  \caption{MSE distortion as a function of SNR level (logarithmic scale) for optimal linear encoders and decoders for three different delay constraints (approximate solutions), and the OPTA for the case without delay constraint. (Problem parameters: $S=1/(z-0.9)$, $M=0$, $W=N=1$, $P=z^{-d}$)}
  \label{fig:performance}
\end{figure}

\section{Conclusion}\label{sec:conclusion}
	This paper has shown how to find optimal LTI encoders and decoders for joint source-channel coding for Gaussian sources and channels. It has also been shown that such encoders and decoders in general require infinite memory. Thus, some approximation has to be done for numerical solution of the problem. It would be interesting to investigate if the performance loss due to such approximations can be somehow bounded.

In the scalar case, the solution has been extended to handle channels with feedback \cite{joh11phd}. This is not presented here to conserve space. Another extension is the problem of feedback control over AWGN channels, which will be the topic of an upcoming paper.

Possible topics for further research includes extending the solution in the MIMO case to channels with colored noise, investigating memory bounds when the channel has feedback and 
the suboptimality of linear solutions.

\appendix
\begin{lem}\label{lem:product_with_inverse_in_Hp}
  Suppose $Y \in \Smirnov$ is square and outer, $X \in \Smirnov$, and that \mbox{$Y^{-1}X \in \Lebesgue{p}$}. Then $Y^{-1}X \in \Hardy{p}$.
\end{lem}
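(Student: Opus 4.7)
The plan is to reduce the claim to the classical scalar Smirnov--class fact that $\Smirnov \cap \Lebesgue{p} = \Hardy{p}$ (see, e.g., Duren's ``Theory of $H^p$ Spaces'', or Garnett's Theorem II.5.5). The algebraic backbone is that $\Smirnov$ is closed under sums and products and contains $1/Y$ whenever $Y\in\Smirnov$ is (scalar) outer, so products $Y\inv X$ of an $X\in\Smirnov$ with the reciprocal of an outer $Y\in\Smirnov$ remain in $\Smirnov$, and then the $\Lebesgue{p}$ hypothesis forces membership in $\Hardy{p}$.

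First I would dispose of the scalar case. Since $Y$ is outer in $\Smirnov$, its canonical representation
\[
  Y(z)=c\,\exp\!\left(\intT\frac{\eiw+z}{\eiw-z}\log|Y(\eiw)|\,\frac{d\omega}{2\pi}\right)
\]
has a reciprocal that is again of this form, with $-\log|Y|$ in place of $\log|Y|$; since $Y\in\Smirnov$ guarantees $\log|Y|\in\Lone$, the function $Y\inv$ is itself a Smirnov outer function, in particular $Y\inv\in\Smirnov$. Since $\Smirnov$ is an algebra, $Y\inv X\in\Smirnov$, and the hypothesis $Y\inv X\in\Lebesgue{p}$ combined with the Smirnov maximum principle gives $Y\inv X\in\Hardy{p}$.

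For the matrix case I would argue entrywise. Write
\[
  Y\inv=\frac{1}{\det Y}\,\operatorname{adj}(Y),
\]
and observe that each entry of $\operatorname{adj}(Y)$ is a polynomial in the entries of $Y$, hence lies in $\Smirnov$. The only nontrivial ingredient is that $\det Y$ is a \emph{scalar} outer function in $\Smirnov$ whenever $Y$ is a square outer matrix in $\Smirnov$; granting this, the scalar step applied to $\det Y$ gives $1/\det Y\in\Smirnov$, so every entry of $Y\inv$ belongs to $\Smirnov$. Then every entry of $Y\inv X$ is a finite sum of products of $\Smirnov$ functions, hence in $\Smirnov$, and the entrywise $\Lebesgue{p}$ hypothesis produces the matrix conclusion $Y\inv X\in\Hardy{p}$ via the scalar result.

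The main obstacle is the matrix-outer-implies-scalar-outer-determinant step. I would justify it via the inner--outer factorization of matrix Smirnov functions (the analogue for $\Smirnov$ of the Sz.-Nagy--Foias factorization for $\Hinf$): if $Y=Y_iY_o$ with inner $Y_i$ and outer $Y_o$, then $\det Y=\det Y_i\cdot\det Y_o$, where $\det Y_i$ is scalar inner and $\det Y_o$ is scalar outer; a square $Y$ is outer precisely when $Y_i$ is constant unitary, forcing $\det Y$ to be scalar outer up to a unimodular constant. An alternative, and perhaps cleaner, route is to bypass the determinant altogether by applying the scalar lemma to an auxiliary factorization: choose a scalar outer $h\in\Smirnov$ with $|h|=|\det Y|$ on $\T$ and write $h\,Y\inv=h\,\operatorname{adj}(Y)/\det Y$, whose entries are quotients of $\Smirnov$ functions by a scalar function of the same modulus, thereby sidestepping the delicate matrix-outer claim.
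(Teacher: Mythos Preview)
Your core strategy is exactly the paper's: show $Y^{-1}\in\Smirnov$, use closure of $\Smirnov$ under products to get $Y^{-1}X\in\Smirnov$, then invoke $\Smirnov\cap\Lebesgue{p}=\Hardy{p}$. The only difference is that the paper obtains $Y^{-1}\in\Smirnov$ in one line by citing Theorem~10 of Inouye (which states directly that the inverse of a square outer $\Smirnov$ matrix lies in $\Smirnov$), whereas you rederive this via the adjugate formula and the scalar-outer property of $\det Y$. Your main route through the inner--outer factorization of $Y$ is sound and yields exactly that Inouye result.

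One caveat: your proposed ``alternative, cleaner'' route does not actually sidestep the outer-determinant claim. Writing $\det Y = g_i\cdot g_o$ for its scalar inner--outer factorization (legitimate since $\det Y\in\Smirnov$), your auxiliary outer $h$ with $|h|=|\det Y|$ on $\T$ must coincide with $g_o$ up to a unimodular constant, so $h/\det Y$ is a constant times $1/g_i$, which is anti-analytic unless $g_i$ is constant. Hence the entries of $hY^{-1}$ are in $\Smirnov$ only if $g_i$ is trivial, i.e., only if $\det Y$ is outer---precisely the claim you hoped to avoid. Stick with your first route or the direct citation.
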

\begin{proof}
  $Y^{-1} \in \Smirnov$ by Theorem 10 in \cite{Inouye}. It is easy to verify that the product of two $\Smirnov$ functions is $\Smirnov$. The proof follows from the fact that $\Lebesgue{p} \cap \Smirnov = \Hardy{p}$ \cite{Garnett}.
\end{proof}

\begin{lem}\label{lem:nonsquare_outer}
  Suppose that $m\leq n$ and that the $m \times n$ transfer matrix $X \in \Hardy{p}$, \mbox{$p\in\lbrace1,2,\infty\rbrace$}, is row outer. 
  Then the singular values of $X$ satisfy 
  \[\log \sigma_k \in \Lone, \quad k=1\ldots m.\]
\end{lem}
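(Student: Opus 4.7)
The plan is to reduce the statement to a scalar Szegő integrability fact via the determinant of $XX^*$, and then separate integrability of the sum $\sum_k\log\sigma_k$ into integrability of each individual term using routine pointwise upper bounds.

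First I would exploit row outerness to produce a square outer spectral factor: because $X$ is $m\times n$ row outer with $m\le n$, there exists an $m\times m$ outer function $X_o\in\Hardy{p}$ such that $XX^* = X_o X_o^*$ a.e.\ on $\T$ (this is the matrix Szegő / spectral factorization available precisely because of row outerness). The determinant of a square outer matrix function is a nonzero scalar outer function, so $\log|\det X_o|\in\Lone$ by the standard scalar Szegő characterization of outer functions. Since the eigenvalues of $XX^*=X_oX_o^*$ are $\sigma_1^2,\ldots,\sigma_m^2$, this yields
\[
\sum_{k=1}^{m}\log\sigma_k \;=\; \tfrac12\log\det(XX^*) \;=\; \log|\det X_o| \;\in\; \Lone.
\]

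The next step is to turn integrability of the sum into integrability of each individual term. Note that $0\le\sigma_k\le\sigma_1\le\norm{X}_F$ pointwise on $\T$, and for each $p\in\{1,2,\infty\}$ the Frobenius norm $\norm{X}_F$ is in $\Lone$: it is essentially bounded for $p=\infty$, lies in $\Lebesgue{2}\subset\Lone$ for $p=2$, and for $p=1$ is dominated by the sum of the entrywise moduli, each of which is in $\Lone$. Applying the elementary inequality $(\log t)^+\le t$ for $t\ge 0$ gives
\[
(\log\sigma_k)^+ \;\le\; \norm{X}_F \;\in\;\Lone
\]
for each $k$. Summing over $k$ and subtracting the $\Lone$-function $\sum_k\log\sigma_k$ shows that $\sum_k(\log\sigma_k)^-\in\Lone$; since each negative part is non-negative and bounded above by this sum, each $(\log\sigma_k)^-$ is individually in $\Lone$. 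Combining the two halves, $\log\sigma_k=(\log\sigma_k)^+-(\log\sigma_k)^-\in\Lone$ for every $k=1,\ldots,m$.

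The main obstacle I anticipate is the very first step, namely producing the square outer spectral factor $X_o$ from the row outerness of $X$ and invoking the fact that the determinant of a square outer matrix function is itself a scalar outer function. Both are classical results in matrix Hardy theory but their precise formulation is what makes the reduction to the scalar Szegő condition work; once they are in place, the remaining analytic estimates are routine.
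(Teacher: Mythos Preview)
Your proposal is correct and follows essentially the same route as the paper's proof. The only cosmetic difference is in how the square outer factor is produced: the paper obtains it explicitly via the factorization $X=X_{co}X_i$ with $X_{co}$ column outer and $X_i$ inner (citing Inouye), then argues from row-outerness of $X$ that $X_{co}$ must be $m\times m$, so that $XX^*=X_{co}X_i X_i^* X_{co}^*=X_{co}X_{co}^*$; you instead invoke the existence of the square outer spectral factor $X_o$ of $XX^*$ as a known consequence of row-outerness. After that, both arguments are identical---outerness of $\det X_o$ gives $\sum_k\log\sigma_k\in\Lone$, and the pointwise bound $\log\sigma_k\le\sigma_k$ (your $(\log t)^+\le t$) together with $\sigma_k\in\Lone$ separates the sum into individual terms.
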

\begin{proof}
  By Theorem 8 in \cite{Inouye} there exists a factorization \mbox{$X = X_{co} X_{i}$}, where $X_{co}$ is column outer and $X_i$ is inner. Since $X_{co}$ has full column rank on $\T$ it cannot have more columns than rows, and since $X$ is row outer $X_{co}$ cannot have fewer rows than columns. Thus $X_{co}$ is
  $m\times m$ and hence, by Theorem 10 in \cite{Inouye}, 
  $\det X_{co}$ is outer and thus $\det X_{co} \in \Smirnov$. According to a statement in section 17.19 in \cite{rudin86real} it follows that \mbox{$\log \left| \det X_{co} \right| \in \Lone$}.

  For the singular values of $X$, it holds that
  \begin{align*}
    \sum_{k=1}^m \log \sigma_k &= \half \log \prod_{k=1}^m \sigma_k^2 =
    \half \log \det XX^* \\ &= \half \log \det X_{co} X_{i} X_{i}^*
    X_{co}^* = \half \log \det X_{co} X_{co}^* \\ &= \log \left| \det
      X_{co} \right| \in \Lone.
  \end{align*}
  Furthermore, $\sigma_k \in \Lone$ since $X\in\Hardy{p}$.
  Because $\log \sigma_k < \sigma_k$ it holds that
  \begin{align*}
    \intT \log \sigma_k\ d\omega< \intT \sigma_k
    \ d\omega < \infty, \quad k=1\ldots m
  \end{align*}
  Since the sum of the logarithms is $\Lone$ and every term has an
  integral bounded from above, it follows that the integral of every term also must be bounded from below. That is,
  \begin{align*}
    \intT \log \sigma_k\ d\omega > -\infty, \quad k=1\ldots m
  \end{align*}
  and hence $\log \sigma_k \in \Lone$, $k=1\ldots m$
\end{proof}

\section*{Acknowledgment}
The authors would like to thank John Doyle (California Institute of Technology) for suggesting the problem,  and Nuno Martins (University of Maryland) and Mikael Skoglund (Royal Institute of Technology) for technical discussions and helpful comments.

\ifCLASSOPTIONcaptionsoff
  \newpage
\fi


\begin{thebibliography}{10}
\providecommand{\url}[1]{#1}
\csname url@rmstyle\endcsname
\providecommand{\newblock}{\relax}
\providecommand{\bibinfo}[2]{#2}
\providecommand\BIBentrySTDinterwordspacing{\spaceskip=0pt\relax}
\providecommand\BIBentryALTinterwordstretchfactor{4}
\providecommand\BIBentryALTinterwordspacing{\spaceskip=\fontdimen2\font plus
\BIBentryALTinterwordstretchfactor\fontdimen3\font minus
  \fontdimen4\font\relax}
\providecommand\BIBforeignlanguage[2]{{%
\expandafter\ifx\csname l@#1\endcsname\relax
\typeout{** WARNING: IEEEtran.bst: No hyphenation pattern has been}%
\typeout{** loaded for the language `#1'. Using the pattern for}%
\typeout{** the default language instead.}%
\else
\language=\csname l@#1\endcsname
\fi
#2}}

\bibitem{joh11phd}
E.~Johannesson, ``Control and communication with signal-to-noise ratio
  constraints,'' Ph.D. dissertation, Department of Automatic Control, Lund
  University, Sweden, Oct. 2011.

\bibitem{pilc1969}
R.~J. Pilc, ``The optimum linear modulator for a gaussian source used with a
  gaussian channel,'' \emph{Bell System Technical Journal}, vol.~48, no.~9,
  Nov. 1969.

\bibitem{lee_petersen1976}
K.-H. Lee and D.~P. Petersen, ``Optimal linear coding for vector channels,''
  \emph{IEEE Transactions on Communications}, vol.~24, no.~12, pp. 1283 --
  1290, Dec. 1976.

\bibitem{basar_sankur_abut1980}
T.~Basar, B.~Sankur, and H.~Abut, ``Performance bounds and optimal linear
  coding for discrete-time multichannel communication systems (corresp.),''
  \emph{IEEE Transactions on Information Theory}, vol.~26, no.~2, pp. 212 --
  217, Mar. 1980.

\bibitem{kochman_zamir2011}
Y.~Kochman and R.~Zamir, ``Analog matching of colored sources to colored
  channels,'' \emph{IEEE Transactions on Information Theory}, vol.~57, no.~6,
  pp. 3180 --3195, June 2011.

\bibitem{Dobrushin1962}
R.~Dobrushin and B.~Tsybakov, ``Information transmission with additional
  noise,'' \emph{IRE Transactions on Information Theory}, vol.~8, no.~5, pp.
  293--304, Sept. 1962.

\bibitem{Wolf1970}
J.~Wolf and J.~Ziv, ``Transmission of noisy information to a noisy receiver
  with minimum distortion,'' \emph{IEEE Transactions on Information Theory},
  vol.~16, no.~4, pp. 406--411, July 1970.

\bibitem{berger1971rate}
T.~Berger, \emph{Rate distortion theory: a mathematical basis for data
  compression}.\hskip 1em plus 0.5em minus 0.4em\relax Englewood Cliffs, NJ,
  USA: Prentice-Hall, 1971.

\bibitem{Witsenhausen1980}
H.~Witsenhausen, ``Indirect rate distortion problems,'' \emph{IEEE Transactions
  on Information Theory}, vol.~26, no.~5, pp. 518--521, Sept. 1980.

\bibitem{Neuhoff1982}
D.~Neuhoff and R.~Gilbert, ``Causal source codes,'' \emph{IEEE Transactions on
  Information Theory}, vol.~28, no.~5, pp. 701--713, Sept. 1982.

\bibitem{Witsenhausen1979}
H.~Witsenhausen, ``On the structure of real-time source coders,'' \emph{Bell
  Syst. Tech. J.}, vol.~58, no.~6, pp. 1437--1451, July-Aug 1979.

\bibitem{Walrand}
J.~Walrand and P.~Varaiya, ``Optimal causal coding--decoding problems,''
  \emph{IEEE Transactions on Information Theory}, vol.~29, no.~6, pp. 814--820,
  Nov. 1983.

\bibitem{teneketzis2006}
D.~Teneketzis, ``On the structure of optimal real-time encoders and decoders in
  noisy communication,'' \emph{Information Theory, IEEE Transactions on},
  vol.~52, no.~9, pp. 4017 --4035, sept. 2006.

\bibitem{gastpar}
M.~Gastpar, ``To code or not to code,'' Ph.D. dissertation, EPFL, Lausanne,
  2002.

\bibitem{derpich2011}
M.~S. Derpich and J.~{\O}stergaard, ``Improved upper bounds to the causal
  quadratic rate-distortion function for gaussian stationary sources,''
  \emph{IEEE Transactions on Information Theory}, accepted for publication.

\bibitem{Derpich08}
M.~Derpich, E.~Silva, D.~Quevedo, and G.~Goodwin, ``On optimal perfect
  reconstruction feedback quantizers,'' \emph{IEEE Transactions on Signal
  Processing}, vol.~56, no.~8, pp. 3871--3890, Aug. 2008.

\bibitem{DerpichPhd}
M.~Derpich, ``Optimal source coding with signal transfer function
  constraints,'' Ph.D. dissertation, University of Newcastle, 2009.

\bibitem{Borkar2001}
V.~Borkar, S.~Mitter, and S.~Tatikonda, ``Optimal sequential vector
  quantization of markov sources,'' in \emph{Proc. IEEE Conference on Decision
  and Control}, vol.~1, 2001, pp. 205--210.

\bibitem{Yuksel2010}
\BIBentryALTinterwordspacing
S.~Y{\"u}ksel, ``On optimal causal coding of partially observed markov sources
  in single and multi-terminal settings,'' 2010. [Online]. Available:
  \url{http://arxiv.org/abs/1010.4824v2}
\BIBentrySTDinterwordspacing

\bibitem{Mahajan2008}
A.~Mahajan and D.~Teneketzis, ``On the design of globally optimal communication
  strategies for real-time noisy communication systems with noisy feedback,''
  \emph{IEEE Journal on Selected Areas in Communications}, vol.~26, no.~4, pp.
  580--595, May 2008.

\bibitem{mahajan09}
------, ``Optimal design of sequential real-time communication systems,''
  \emph{IEEE Transactions on Information Theory}, vol.~55, no.~11, pp.
  5317--5338, Nov. 2009.

\bibitem{Wiener1964}
N.~Wiener, \emph{Extrapolation, Interpolation, and Smoothing of Stationary Time
  Series}.\hskip 1em plus 0.5em minus 0.4em\relax The MIT Press, 1964.

\bibitem{martins07}
N.~Martins, M.~Dahleh, and J.~Doyle, ``Fundamental limitations of disturbance
  attenuation in the presence of side information,'' \emph{IEEE Transactions on
  Automatic Control}, vol.~52, no.~1, pp. 56--66, Jan. 2007.

\bibitem{Garnett}
J.~Garnett, \emph{Bounded analytic functions}, revised 1st~ed.\hskip 1em plus
  0.5em minus 0.4em\relax New York, NY, USA: Springer, 2007.

\bibitem{rudin86real}
W.~Rudin, \emph{Real and Complex Analysis}, 3rd~ed.\hskip 1em plus 0.5em minus
  0.4em\relax McGraw-Hill Science/Engineering/Math, May 1986.

\bibitem{Inouye}
Y.~Inouye, ``Linear systems with transfer functions of bounded type: Canonical
  factorization,'' \emph{IEEE Transactions on Circuits and Systems}, vol.~33,
  no.~6, pp. 581--589, June 1986.

\bibitem{Wiener1959}
N.~Wiener and E.~Akutowicz, ``A factorization of positive {H}ermitian
  matrices,'' \emph{Indiana Univ. Math. J.}, vol.~8, pp. 111--120, 1959.

\bibitem{yalmip}
J.~L\"{o}fberg, ``Yalmip : A toolbox for modeling and optimization in
  {MATLAB},'' in \emph{Proceedings of the CACSD Conference}, Taipei, Taiwan,
  2004.

\bibitem{sedumi}
J.~Sturm, ``Using {SeDuMi} 1.02, a {MATLAB} toolbox for optimization over
  symmetric cones,'' \emph{Optimization Methods and Software}, vol. 11--12, pp.
  625--653, 1999, version 1.3 available from
  {\texttt{http://sedumi.ie.lehigh.edu/}}.

\end{thebibliography}
\end{document}